\theoremstyle{plain}
\newtheorem{thm}{Theorem}[section]
\newtheorem{theorem}[thm]{Theorem}
\newtheorem{definition}[thm]{Definition}
\newtheorem{proposition}[thm]{Proposition}
\newtheorem{lemma}[thm]{Lemma}
\newtheorem{claim}[thm]{Claim}
\newtheorem{fact}[thm]{Fact}
\theoremstyle{definition}
\newcommand{\A}{\mathcal{A}}
\newcommand{\B}{\mathcal{B}}
\newcommand{\X}{\mathcal{X}}
\newcommand{\Y}{\mathcal{Y}}
\newcommand{\SmoothMinEntropy}[2]{H^{#1}_\infty(#2)}
\newcommand{\MinEntropy}[1]{H_\infty(#1)}
\newcommand{\N}{\mathbb{N}}
\newcommand{\eps}{\varepsilon}
\newcommand{\win}{\textsc{win}}
\newcommand{\CHSH}{\textsc{chsh}}
\newcommand{\magic}{\textsc{ms}}
\newcommand{\valq}{\omega_q}
\newcommand{\valns}{\omega_{ns}}
\definecolor{mygrey}{gray}{0.50}
\begin{document}

\title{Robust Randomness Amplifiers: Upper and Lower Bounds}
\author{
\makebox[.2\columnwidth]{Matthew Coudron\thanks{Computer Science and Artificial Intelligence Laboratory, Massachusetts Institute of Technology. T.V. was supported by the National Science Foundation under Grant No. 0844626 and by and by the Ministry of Education, Singapore under the Tier 3 grant MOE2012-T3-1-009. H.Y. was supported by an NSF Graduate Fellowship Grant No. 1122374 and National Science Foundation Grant No. 1218547.  M.C. was supported by the National Science Foundation under Grant No. 0801525.}}
\and 
\makebox[.2\columnwidth]{Thomas Vidick\footnotemark[1]\,\,\,\thanks{Centre for Quantum Technologies, Singapore}}
\and 
\makebox[.2\columnwidth]{Henry Yuen\footnotemark[1]}
}

\maketitle

\begin{abstract}
A recent sequence of works, initially motivated by the study of the nonlocal properties of entanglement, demonstrate that a source of information-theoretically certified randomness can be constructed based only on two simple assumptions: the prior existence of a short random seed and the ability to ensure that two black-box devices do not communicate (i.e. are non-signaling). We call protocols achieving such certified amplification of a short random seed \emph{randomness amplifiers}.

We introduce a simple framework in which we initiate the systematic study of the possibilities and limitations of randomness amplifiers. Our main results include a new, improved analysis of a robust randomness amplifier with exponential expansion, as well as the first upper bounds on the maximum expansion achievable by a broad class of randomness amplifiers. In particular, we show that non-adaptive randomness amplifiers that are robust to noise cannot achieve more than doubly exponential expansion. Finally, we show that a wide class of protocols based on the use of the $\CHSH$ game can only lead to (singly) exponential expansion if adversarial devices are allowed the full power of non-signaling strategies. Our upper bound results apply to all known non-adaptive randomness amplifier constructions to date.
\end{abstract}

\section{Introduction}\label{sec:intro}
Consider the following simple game, called the $\CHSH$ game: a referee sends each of a pair of isolated, cooperating but non-communicating players Alice and Bob a bit $x,y\in\{0,1\}$ respectively, chosen uniformly at random. Alice and Bob reply with bits $a,b\in\{0,1\}$, and they win the game iff $a\oplus b = x \wedge y$. If Alice and Bob employ classical strategies, the probability that they win the game is at most $75\%$. As a consequence, one readily sees that \emph{any} non-signaling strategy (i.e. a strategy in which each player's marginal output distribution is independent of the other player's input) that wins the $\CHSH$ game with probability strictly larger than $75\%$ \emph{must} generate randomness. Remarkably, there actually \emph{exists} such a strategy, allowing them to win with probability $\cos^2(\pi/8) \approx 85\%$. Furthermore, the strategy can be physically implemented using simple ``everyday'' quantum mechanical devices that utilize shared entanglement~\cite{Aspect81}. In his Ph.D. thesis, Colbeck~\cite{Colbeck09} was the first to explicitly observe that the $\CHSH$ game could be interpreted as a simple \emph{statistical test} for the presence of randomness: the test repeatedly ``plays'' the CHSH game with a given pair of black-box devices. Provided that non-signaling is enforced between the devices (via space-time separation or other means), the observation of a sufficiently high success probability can be used to \emph{certify} the generation of ``fresh'' randomness. In particular, the soundness of the test does \emph{not} require one to assume that quantum mechanics is correct. (Of course, as far as we know, the easiest way to actually \emph{pass} the test is to perform certain specific quantum mechanical measurements on two halves of an EPR pair!)

It is easy to see that without any assumptions, black-box randomness testing is impossible: if a (randomized) test $T$ accepts a random source $X$ with some probability $p$, by linearity of expectation there automatically exists a deterministic source $Y$ (i.e. a fixed string) that is accepted with probability at least $p$. Thus it is quite surprising that a very simple physical assumption -- that it is possible to enforce non-signaling between two devices -- allows for an information-theoretic method to test for randomness in the devices' outputs. As we shall see, the test provides guarantees on the \emph{min-entropy} of the outputs, which enables the tester to later apply a classical procedure such as a randomness extractor to generate bits that are nearly independent and uniformly distributed, making them useful in algorithmic or cryptographic applications (for a survey on randomness extractors we refer to~\cite{Sha02}).

Starting with work of Pironio et al.~\cite{Pironio}, a series of papers~\cite{Colbeck11,FGS11,vazirani2012certifiable,PM11} have demonstrated that not only can randomness be certified, but it can be \emph{expanded} as well. In~\cite{Pironio}, a protocol was given in which the testing requires $m$ bits of seed randomness, but the output of the devices is certified to have $\Omega(m^2)$ bits of min-entropy. Vazirani and Vidick~\cite{vazirani2012certifiable} show that there exists a protocol that can produce $2^{\Omega(m)}$ bits of certifiable randomness starting from $m$ bits of seed randomness. In their protocol, the referee uses the seed to generate pseudorandom inputs for the two devices; the devices play $2^{O(m\log^2m)}$ iterations of (a variant of) the $\CHSH$ game on those inputs. The referee then tests that the wins and losses of the devices obey a simple statistical condition. One can show that, whenever the devices are designed in a way that they pass the test with non-negligible probability, their output distribution (conditioned on passing) must have high min-entropy. The test, however, is not \emph{robust} in the sense that even a very slight deviation by the devices from the intended behavior will result in rejection. Robust protocols for exponential randomness expansion were devised in~\cite{FGS11,PM11} but they use \emph{two} pairs of devices, and furthermore rely on the strong assumption that there is no entanglement between the pairs. 

These prior works immediately raise a wealth of questions, for which there has been no systematic investigation so far: What is the maximal expansion achievable? Could doubly exponential expansion, or even an \emph{unbounded}, expansion of randomness be possible? Can exponential expansion be achieved using a more natural protocol that is robust to noise? What are the minimal assumptions required on the seed quality? While many specific protocols have been considered in the quantum information literature~\cite{Colbeck11,FGS11,colbeck2012free}, to our knowledge no general model of randomness certification and amplification had yet been formulated.

In this paper we introduce a simple and natural framework for randomness amplification which captures nearly all previously considered protocols. We initiate the systematic investigation of the possibilities and limitations of such protocols, which we call \emph{randomness amplifiers}.\footnote{These protocols have been called ``randomness expanders'' or ``randomness expansion protocols'' in prior works, but we adopt the term randomness amplifiers to avoid confusion with the traditional concept of expanders.} In particular, we present both the first \emph{upper bounds} on the achievable randomness expansion of natural protocols, as well as the first robust exponential \emph{lower bounds}. (Note that here, contrary to common usage in theoretical computer science, upper bounds on randomness expansion are \emph{impossibility} results, whereas lower bounds are \emph{possibility} results.) 

\paragraph{A puzzle.} Before describing our results in greater detail, we invite the interested reader to contemplate the following puzzle.
Consider a protocol in which the referee chooses a single pair of uniformly random bits $x,y\in\{0,1\}$, and sends $x^n$ and $y^n$ ($x$ and $y$ repeated $n$ times each) to two non-signaling devices $D_A$ and $D_B$, respectively. The referee collects the devices' output sequences $(a_1,\ldots,a_n)$ and $(b_1,\ldots,b_n)$, and accepts iff $85\% \pm 1\%$ of the rounds $i$ are such that $a_i \oplus b_i = x\wedge y$ (i.e. the $\CHSH$ condition). Under the a priori assumption that the devices pass this protocol with probability at least $99\%$, (i) what is the minimal amount of randomness that the devices must have generated, and (ii) what are strategies for the devices that achieve this while generating as little randomness possible?

Tackling (i) consists in proving a lower bound, while (ii) considers upper bounds. Establishing a lower bound requires ruling out clever ``cheating strategies'' by the devices, in which they would pass the referee's test while still producing outputs with little min-entropy. Upper bounds consist in devising such clever strategies. The upper bounds we prove in Section \ref{sec:upper} demonstrate the possibility for non-trivial cheating strategies, that take advantage of structural properties of the referee's test in order to save on the randomness generated and defeat the protocol.

\paragraph{Robust protocols.} An appealing aspect of randomness amplifiers is that they only rely on two basic physical assumptions: the ability to enforce the non-signaling condition between devices, and the a priori existence of a some small amount of randomness to use as seed. As such, these protocols lend themselves quite naturally to experimental implementations. In fact,~\cite{Pironio} report an implementation of their quadratic randomness amplifier in which $42$ bits of certified randomness were generated (over the course of a month of experiments!). 

However, noise as well as errors due to imperfections in laboratory equipment are unavoidable in such experiments. Given the recent interest in realizing efficient implementations of randomness expansion protocols\footnote{Such protocols have recently been suggested as a benchmark for the closure of the so-called \emph{detection loophole}. We refer to the recent survey~\cite{bellnonlocality} for more details.}, it is important to understand the power and limitations of protocols that behave robustly in the presence of noise and imperfect devices. Some randomness amplifiers, such as the one in \cite{vazirani2012certifiable}, are not robust to noise. Is this an artifact or an intrinsic limitation of protocols that achieve exponential randomness expansion? 

\subsection*{Our results}

\paragraph{The model.}
Our first contribution is the introduction of a natural model for randomness amplifiers. Abstractly, we think of a randomness amplifier as a family of \emph{protocols} describing an interaction between a trusted entity (called the referee) and a pair of black-box devices. The referee selects inputs to the devices, collects outputs, and based on these 
decides to either accept or reject the devices' outputs. 
The protocols are parametrized by a \emph{seed length} $m$, which is the amount of initial randomness required to execute the protocol.
The output of the protocol is defined as the output of the black-box devices over the course of the interaction (provided the referee accepted). The procedure has completeness $c$, soundness $s$, and expansion $g=g(m)$ if (i) there exists a pair of non-signaling devices, called the \emph{ideal devices}
, such that the referee's interaction with them will result in a ``pass'' with probability at least $c$, and (ii) for \emph{any} pair of non-signaling devices (either bound by the laws of quantum mechanics or not, depending on context) such that they pass the protocol with probability at least $s$, the output distribution of the devices has min-entropy at least $g(m)$ --- where, ideally, $g(m)\gg m$.

The interaction between the referee and the devices could a priori be arbitrary. In this paper we restrict our attention to \emph{non-adaptive} protocols. In such protocols the referee uses his random seed to select a pair of input strings to be given to each device. He then provides the inputs one symbol at a time, collecting outputs from the devices. At the end of the interaction, the referee applies a test to the inputs and outputs he has collected. Such protocols are called non-adaptive because the inputs to the devices do not depend on the devices' outputs in previous rounds. Nearly all protocols considered in the literature are non-adaptive.

In addition, we formalize the notion of ``robust'' randomness amplifiers: informally, an amplifier is robust if small deviation from the behavior of the ideal devices still results in acceptance with high probability. Since randomness amplification is based on physical assumptions, it is natural to consider models that are robust to noise or device imperfections. Naturally, allowing noisy devices makes the analysis harder, e.g. to prove lower bounds on robust protocols we have to account for the fact that devices may use the freedom to deviate to cheat the protocol. However, we will also show that in certain cases, non-robust protocols can be cheated by malicious devices that exploit the
possibility for noise-free operation!

Unlike the protocols considered in~\cite{Colbeck09, Pironio, vazirani2012certifiable, PM11, FGS11}, conditioned on passing the protocol, the devices' outputs are only required to have high min-entropy, as opposed to being close to uniform. As alluded to above, the guarantee that the devices' output has high min-entropy allows one to apply a randomness extractor to produce nearly uniform bits -- indeed, that is what these previous works do. However, it is known that randomness extraction for min-entropy sources requires an independent seed of logarithmic length~\cite{radhakrishnan2000bounds}, thus trivially limiting many protocols to exponential expansion! Since our interest is in exploring the limits and possibilities of randomness expansion -- including the possibility of super-exponential expansion -- we make the choice of measuring the output randomness by its min-entropy.

\paragraph{A robust lower bound.} Our first result is a lower bound: we extend and generalize the result of~\cite{vazirani2012certifiable} by devising a randomness amplifier that attains exponential expansion and is robust to noisy devices. The underlying protocol is simple and can be based on any non-local game (and not only the $\CHSH$ game as in~\cite{vazirani2012certifiable}) that is \emph{randomness generating}. Informally, randomness generating games are such that any strategy achieving a success probability strictly higher than the classical value must produce randomized answers, on a certain fixed pair of inputs $(x_0,y_0)$  that depend only on the game, not the strategy. Many examples of games are known to be randomness generating, and we give an additional example based on the Magic Square game~\cite{Arvind:02}.

Fix a two-player game $G$. Let $\eta$ denote the ``noise tolerance'' parameter, $\eps$ a target ``security'' parameter and $R$ a number of rounds. The robust protocol $P_G$ is as follows: in each round, with some small probability $p_c$ the two devices are presented with inputs as prescribed in the game $G$. Such rounds are called game rounds. Otherwise, they are presented with some default inputs $x_0, y_0$ respectively. The referee collects the outputs of the two devices for the $R$ rounds, and checks that on average over the game rounds the devices' inputs and outputs satisfy the game condition a fraction of times that is at least the maximum success probability achievable in $G$ using quantum mechanics, minus $\eta$. 

\medskip
\begin{theorem}[Informal]
Let $m$ be a positive integer. Let $G$ be a randomness generating game, $\eta,\eps>0$ and $P_G$ the protocol described above, for some $R=R(m) \leq \exp(m /\log(1/\eps))$ and $p_c = \Theta(\log(1/\eps)/R)$. Then $P_G$ uses $m$ bits of seed, has completeness $1 - \exp(-\eta^2 R)$, soundness $\eps$ and expansion $g(m)=\Omega(R(m))$. 
\end{theorem}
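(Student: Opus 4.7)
First I verify the three claims other than soundness. The seed is used by the referee to (i) decide which of the $R$ rounds are game rounds and (ii) draw inputs for each game round from the input distribution of $G$. With $p_c = \Theta(\log(1/\eps)/R)$ the expected number of game rounds is $T = \Theta(\log(1/\eps))$, and a Chernoff estimate caps the actual number of game rounds by a constant times this mean with overwhelming probability. Specifying the positions of at most $O(T)$ game rounds within $[R]$ takes $O(T\log R)$ bits, and specifying their inputs takes $O(T)$ further bits; using the hypothesis $R \leq \exp(m/\log(1/\eps))$, so that $\log R \leq m/\log(1/\eps)$, this amounts to $O(m)$ bits and fits the seed budget. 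For completeness, the ideal devices play an optimal quantum strategy for $G$ independently on each round; on game rounds they win with probability $\valq(G)$ independently of history, so a Hoeffding bound on the empirical success rate gives failure probability at most $\exp(-\Omega(\eta^2 R))$. The expansion $g(m)=\Omega(R)$ is then a restatement of the min-entropy lower bound produced by the soundness analysis.

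It remains to establish soundness. Fix a non-signaling strategy $S$ that passes the protocol with probability at least $\eps$, and let $\mathcal{E}$ denote the acceptance event. I want to show that the joint output $(\vA,\vB)$ conditioned on $\mathcal{E}$ has min-entropy $\Omega(R)$. The crucial input is the hypothesis that $G$ is randomness generating: there is a nondecreasing function $f$ with $f(\eta')>0$ for all sufficiently small $\eta'>0$ such that any non-signaling strategy winning $G$ with probability at least $\valq(G)-\eta'$ produces, on the special input pair $(x_0,y_0)$, outputs whose marginal has min-entropy at least $f(\eta')$. This property localises the CHSH-style test into a quantitative randomness guarantee on a single pair of inputs that can be reused across all non-game rounds.

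The heart of the proof is upgrading this single-round statement into a min-entropy lower bound that scales linearly with $R$. I carry this out in two steps. First, a Markov-style averaging argument: because the set of game rounds is a uniformly random small subset of $[R]$, conditioning on $\mathcal{E}$ (an event of probability at least $\eps$) cannot substantially degrade the game-round success statistics at too many rounds. Concretely, the set $B$ of ``bad'' rounds $i$ at which the conditional strategy given the transcript of rounds $1,\ldots,i-1$ would succeed on $G$ with probability less than $\valq(G)-2\eta$ satisfies $|B|\leq \delta R$ for some constant $\delta=\delta(\eta,\eps)$; otherwise the protocol's aggregate game-round success rate, conditional on $\mathcal{E}$, would deviate from $\valq(G)$ by more than $\eta$. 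Second, for each good round $i\notin B$ the randomness generating property applied to the conditional strategy at round $i$ guarantees that the response to $(x_0,y_0)$ has conditional min-entropy at least $f(2\eta)$ given the previous transcript. Summing these per-round contributions via a chain rule for conditional (smooth) min-entropy, and absorbing the $\log(1/\eps)$ loss incurred by conditioning on $\mathcal{E}$, yields a total min-entropy of $\Omega((1-p_c)(1-\delta)R\cdot f(2\eta))=\Omega(R)$.

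The main obstacle I anticipate is this second step. The per-round randomness-generating guarantee constrains the \emph{unconditioned} marginal on inputs $(x_0,y_0)$, whereas I need to chain these bounds into a lower bound on the min-entropy of the entire conditional output, in the presence of an adversarial device with memory and a global conditioning on $\mathcal{E}$. I expect the cleanest route is a guessing-game reduction in the spirit of~\cite{vazirani2012certifiable}: if the joint output had min-entropy below $\Omega(R)$, a heavy-atom adversary would convert $S$ into a strategy that on average succeeds at $G$ with probability at least $\valq(G)-O(\eta)$ while deterministically outputting a predicted string on inputs $(x_0,y_0)$, directly contradicting the randomness-generating property. An alternative is a smooth-min-entropy chain rule with careful accounting of the conditional events; either route ultimately reduces the analysis to the single-round $f(\eta)>0$ bound granted by the hypothesis on $G$.
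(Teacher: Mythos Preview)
Your seed-length and completeness arguments are fine and match the paper's. The issue is in the soundness argument, where your primary route and your fallback route are quite different in viability.

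The chain-rule approach you outline first does not go through cleanly, for exactly the reason you flag. The randomness-generating hypothesis bounds $\Pr(A=a\mid X=x_0)$ for a \emph{fixed} strategy; to chain per-round bounds you would need, for each history $h$ in each ``good'' round $i$, that the conditional strategy at round $i$ given history $h$ still wins $G$ with probability $\geq \valq(G)-O(\eta)$. Your Markov step only controls an \emph{average} over histories (and over the randomness of which rounds are game rounds), so you cannot apply the single-round hypothesis pointwise. Worse, once you condition on the acceptance event $\mathcal{E}$, the per-round conditional strategies are no longer non-signaling in general (since $\mathcal{E}$ depends on the future), so the hypothesis on $G$ does not even apply to them. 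These are genuine obstructions, not just missing details.

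Your alternative---the heavy-atom/guessing-game reduction---is the right approach and is precisely what the paper does. The paper argues contrapositively: if $H_\infty^\eps(A,B\mid \mathcal{E})<\delta R$, then there is a transcript $(x,y,a,b)$ with $\Pr((A,B)=(a,b)\mid(X,Y)=(x,y))>2^{-\delta R}$ and lying in the acceptance set. Two ingredients then extract a single contradictory round. First, taking logarithms and using Bayes' rule plus concavity of $\log$ turns the bound $2^{-\delta R}$ into an average statement $\frac{1}{R}\sum_i \Pr(A_i=a_i\mid X_i=x_0,\text{hist}_{<i})\geq 1-O(\delta)$; a Hoeffding step transfers this average from all rounds to the (random) game rounds with $x_i=x_0$. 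Second, an Azuma/martingale argument (not Markov) shows that for most such heavy transcripts the \emph{conditional} per-round winning probabilities on game rounds average to at least $\valq(G)-2\eta$. Averaging over game rounds then produces a single round $i$ at which both properties hold; running the protocol up to round $i$ and post-selecting on the fixed transcript yields a single-round strategy that wins with probability $\geq\valq(G)-O(\eta)$ yet has $\Pr(A=a_i\mid X=x_0)\geq 1-O(\delta)$, contradicting the randomness-generating hypothesis. Note the conclusion is \emph{near}-deterministic ($1-O(\delta)$), not fully deterministic as you wrote, which is why the quantitative form of the hypothesis (the $1-\xi$ in the definition) matters.
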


\paragraph{Upper bounds.} We present the first upper bounds on non-adaptive randomness amplifiers.
Our first upper bound applies to protocols based on \emph{perfect games}, which are games $G$ such that there exists a quantum strategy that wins $G$ with probability $1$ (an example is the Magic Square game described in Appendix~\ref{app:rg}). We consider simple protocols in which the referee's test is to verify that the devices win every single round. We give a simple argument, based on the construction of a ``cheating strategy'' for the devices, showing that any such protocol can achieve at most doubly exponential expansion.

While this simple class of protocols already encompasses some protocols introduced in the literature, such as one described in~\cite{Colbeck09}, many protocols do not use perfect games and such a stringent testing condition from the referee. We thus extend this initial upper bound and show that it also applies to arbitrary non-adaptive randomness amplifiers, provided that they are noise-robust and the ideal devices play each round independently.

\medskip
\begin{theorem}[Informal]
	Let the family of protocols $P=(P_m)$ be a non-adaptive randomness amplifier. Suppose that for all $m\in \N$, $P_m$ is noise-robust and the ideal devices for $P_m$ play each round independently. Then, for all $m\in \N$ there exists two quantum devices that are accepted by the protocol $P_m$ with high probability, but whose output min-entropy is at most $2^{O(2^m)}$. 
\end{theorem}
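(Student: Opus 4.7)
The plan is to exhibit quantum cheating devices that pass $P_m$ with high probability while producing outputs whose distribution has min-entropy at most $2^{O(2^m)}$. The construction leverages two structural features of the protocol: the $m$-bit seed yields at most $N := 2^m$ distinct input sequences $(\vec{x}(s), \vec{y}(s))$ that the devices can possibly see, and the assumption that the ideal devices play each round independently means the ideal $R$-round output is a product of a single-round distribution $P^{id}_{AB|XY}$.

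The first step is to convert noise-robustness into a type-based acceptance criterion. Because the ideal output is an i.i.d.\ product, Chernoff concentration makes its empirical conditional statistics on rounds labeled by each input pair $(x,y)$ cluster around $P^{id}_{AB|XY}(\cdot\mid x,y)$. Noise-robustness then implies, for some constant $\eta > 0$, that the test accepts any output whose conditional type on every input pair lies within $\eta$ total-variation distance of $P^{id}_{AB|XY}$: the test cannot distinguish among output sequences that could plausibly have come from small i.i.d.\ perturbations of the ideal strategy. This reduces the cheating task to producing outputs with correct empirical types for each of the $N$ possible input sequences.

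The second step is the cheating construction. Choose $T = O(\log N / \eta^{2})$ large enough that $T$ independent samples of $P^{id}_{AB|XY}(\cdot\mid x,y)$ yield an empirical distribution $\eta$-close to the truth with probability at least $1 - 1/(10 N |\X||\Y|)$. The cheating devices share $N \cdot |\X| \cdot |\Y| \cdot T$ copies of the ideal single-round entangled state, labeled by quadruples $(s,x,y,k)$. Before the protocol begins Alice measures her share of each copy in the basis indexed by $x$ and Bob his share in the basis indexed by $y$, producing stored tables $\{a^{s,x,y}_{k}\}$ and $\{b^{s,x,y}_{k}\}$ whose joint entries follow $P^{id}_{AB|XY}(\cdot\mid x,y)$. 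In round $i$ of the protocol, Alice, from the input sequence $\vec{x}$ she has observed, recovers the seed $s$ (under the assumption that $s \mapsto \vec{x}(s)$ is injective) and computes $y_{i} = \vec{y}(s)_{i}$; she then outputs $a^{s,x_{i},y_{i}}_{c}$, where $c$ counts prior occurrences of the triple $(s,x_{i},y_{i})$. Bob does the symmetric thing with $\vec{y}$. Because both parties derive the same label, their outputs in each round come from the same pre-measured entangled pair and are jointly distributed as $P^{id}_{AB|XY}(\cdot\mid x_{i},y_{i})$; a union bound over the at-most-$|\X||\Y|$ input pairs then shows that all empirical conditional types are $\eta$-close to $P^{id}$ with high probability, so the test accepts.

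The output is a deterministic function of the pre-measurement outcomes and the seed, so its min-entropy is bounded by the total randomness of the pre-measurements plus $m$, giving $O(N \cdot |\X| \cdot |\Y| \cdot T) + m = 2^{O(2^m)}$ (in fact only singly exponential in $m$, comfortably within the claimed bound). The main technical obstacle I anticipate is the implicit assumption that each party can locally recover the seed from its own input sequence. In the general case a party knows only the equivalence class of seeds consistent with its observations; handling this cleanly requires either arguing that inputs can be canonicalized without loss of generality (e.g.\ by having the referee append a seed-identifying tag, which does not increase seed entropy) or building a pre-agreed canonical-label rule computable from local information alone and verifying that the empirical types remain correct under that convention. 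This case analysis is where the formal proof must be careful, but I expect it not to change the order of the final entropy bound.
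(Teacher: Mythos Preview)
Your proposal has a genuine gap precisely where you flag it: the assumption that each device can locally recover the seed $s$. Neither of your suggested fixes works. Having the referee append a seed tag changes the protocol, so it cannot be used to upper-bound the expansion of the \emph{given} $P_m$. A ``canonical-label rule computable from local information alone'' is exactly what is needed, but you have not supplied one, and the obvious candidates fail: if $s \mapsto \vec{x}(s)$ is not injective (e.g.\ Alice's input is constant while Bob's varies with $s$), Alice has \emph{no} information about $s$, hence none about $y_i$, so she cannot select the table entry $a^{s,x_i,y_i}_c$ that matches Bob's choice; their outputs then come from independent pre-measured copies and lose the required correlation. Even when $s \mapsto \vec{x}(s)$ is injective there is a sequential problem you do not address: Alice must produce $a_i$ after seeing only the prefix $x_{\leq i}$, which in general does not determine $s$.

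The paper's approach supplies exactly the canonical label you want, but indexed by something both players know \emph{without} learning the seed: the round number. Define the input matrix $M$ with rows $M_i = \big((x_i(\sigma),y_i(\sigma))\big)_{\sigma \in \{0,1\}^m}$ and let $F(M)$ be the set of rounds whose row has not appeared before. Both devices know $M$ (it is part of the protocol description) and the current round index $i$, so each can locally decide whether $i \in F(M)$ and, if not, which earlier $j \in F(M)$ has $M_j = M_i$. The key point is that $M_j = M_i$ forces $(x_i,y_i)=(x_j,y_j)$ simultaneously for \emph{every} seed, so replaying round-$j$ samples in round $i$ is correct regardless of which seed the referee actually used. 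No seed recovery is needed: in a ``fresh'' round $i\in F(M)$ each device simply applies the ideal single-round strategy $K$ times to \emph{its own} input $x_i$ (resp.\ $y_i$) and stores the outcomes. This is also why your singly-exponential entropy bound evaporates: the number of labels the devices can agree on without knowing the seed is not $2^m$ but $|F(M)|$, which can be as large as $|\X \times \Y|^{2^m}$, and that is the source of the doubly-exponential bound.

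A secondary issue: your ``type-based acceptance'' reduction is not what $\eta$-robustness gives. Robustness is a statement about \emph{strategies} whose per-round output distribution is $\eta$-close to the ideal, not about individual output strings with approximately correct empirical types; a deterministic string with perfect types has per-round distribution a point mass, far from ideal. The paper instead argues that the cheating strategy (sample $K$ times in fresh rounds, then in every round output a uniformly random stored sample via shared randomness) has, conditioned on a high-probability good-sampling event, per-round marginals $\eta$-close to $S_G$; this is what $\eta$-robustness consumes, and the shared replay randomness is derandomized only afterward by averaging over a fixing $V^*$ that preserves acceptance probability.
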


We refer to Theorem~\ref{thm:double-exp} for a precise statement. The basic idea for the cheating strategy is to show that, provided the referee's seed is short enough, the devices can often deterministically re-use some of their outputs in previous rounds. That the referee's test can be arbitrary complicates the argument somewhat, a priori preventing a systematic re-use by the devices of their past outputs: the test could check for obvious patterns that could arise in any obvious re-use strategies. To get around this we use the probabilistic method to show that for any noise-robust test there exists a randomness-efficient re-use strategy that will fool it. 

Our last upper bound is a stronger, \emph{exponential} upper bound on randomness amplifiers that are based on the $\CHSH$ game and in which the referee's test only depends on the pattern of wins and losses in the game that is observed in the protocol. 
However, our ``cheating strategy'' for such protocols requires the use of perfect non-signaling devices (which are able to win the $\CHSH$ game with probability $1$). As such, the significance of the theorem is in the proof rather than in the statement: it demonstrates the possibility for elaborate cheating strategies that exploit the structure of the protocol in order to be accepted in a highly randomness-efficient way. Overcoming this kind of behavior of the devices is a major hurdle in designing protocols that achieve more than exponential expansion of randomness, a tantalizing open problem that we leave open for further work. 

\paragraph{Related work.} 
As mentioned earlier,~\cite{Pironio}, building on~\cite{Colbeck09}, were the first to obtain a quantitative lower bound on randomness expansion. They showed that quantum or non-signaling devices that demonstrate \emph{any} Bell inequality violation can be used to certify randomness. Fehr et al.~\cite{FGS11} extended this result to demonstrate exponential expansion, although their protocol requires the use of two \emph{unentangled} pairs of devices. Vazirani and Vidick~\cite{vazirani2012certifiable} describe a protocol with exponential expansion that only requires two devices. Their protocol, however, is not robust to noise and is tailored to the specifics of the $\CHSH$ game. 

When considering the use of the bits generated by a randomness amplifier in a cryptographic task it may be necessary to obtain stronger guarantees than simply a lower bound on their min-entropy: indeed, in some cases it is essential that the bits not only appear random by themselves, but are also uncorrelated with any potential adversary (say, the maker of the devices). The protocol of~\cite{FGS11} is proven secure against classical adversaries;~\cite{vazirani2012certifiable} also obtain security against quantum adversaries. In this work we do not consider such extended guarantees of security. 

It is worth noting that the protocols given in~\cite{Colbeck09, Colbeck11} do not formally conform to our model of randomness amplifiers; they are based on the GHZ game, which involves three non-communicating devices. However, our expansion upper bounds can be modified to apply to protocols involving more than two devices (see Appendix~\ref{app:upper} for an example). 

Recent results investigate the use of Bell inequality violations to extract almost uniform bits from a weak random source, without requiring a uniform seed (in contrast with the aforementioned protocols, as well as the protocols discussed in this paper)~\cite{colbeck2012free,gallego2012full}. In particular, these works show that it is possible, using the non-signaling principle as a guarantee, to extract almost uniform randomness from so-called Santha-Vazirani sources. The analogous classical task of deterministically extracting uniform random bits from Santha-Vazirani sources is known to be impossible~\cite{santha1986generating}.  Plesch and Pivoluska~\cite{plesch2013single} extend this result to sources guaranteed to have some amount of min-entropy (which is more general than a Santha-Vazirani source) -- but their protocol requires \emph{three} non-signaling devices. Thinh et al.~\cite{Thinh13minentropy} show limitations on randomness extraction based on Bell inequality violations from general min-entropy sources. 

\paragraph{Acknowledgments.} T.V. thanks Andy Drucker and Avi Wigderson for discussions. M.C. and H.Y. thank Scott Aaronson for his engaging class on Quantum Complexity Theory, for which some of the upper bounds were developed as a course project. H.Y. also thanks Joseph Bebel for helpful comments. We thank the anonymous RANDOM referees as well as Marco Tomamichel for comments that helped improve the presentation of our paper. 

\paragraph{Outline of the paper.} We start with some preliminaries in Section~\ref{sec:prelim}. Our model is introduced in Section~\ref{sec:model}. In Section~\ref{sec:lower} we establish our exponential lower bound, while Section~\ref{sec:upper} contains our doubly exponential and exponential upper bounds.

\section{Preliminaries}\label{sec:prelim}
\textbf{Notation}. Given an integer $n$ we write $[n]=\{1,\ldots,n\}$. Given a string $x\in \X^n$, where $\X$ is a finite alphabet, we let $x_{\leq i} = (x_1,\ldots,x_i)$, $x_{>i} = (x_{i+1},\ldots,x_n)$, etc. 
If $\X, \Y$ are alphabets and $\pi$ a probability distribution over $\X\times \Y$, for all $R \in \N$ we let $\pi^{\otimes R}$ denote the product distribution defined over $\X^R \times \Y^R$ by $\pi^{\otimes R}(x_1,\ldots,x_R,y_1,\ldots,y_R) = \prod_{i \in [R]} \pi(x_i,y_i)$.
We use capital letters $X,Y,\ldots$ to denote random variables. Let $X$ be a random variable that takes values in some discrete domain $\mathcal{D}$. Its min-entropy is defined as $\MinEntropy{X} = -\log \max_{x\in\mathcal{D}} \Pr(X = x)$. The Shannon entropy of a random variable $X$ is denoted $H(X)$ as usual. We also define the max-entropy of a random variable $X$ as $H_0(X) = \log(|\mathrm{supp}(X)|)$, where $\mathrm{supp}(X)$ denotes the support of $X$. 
The conditional min-entropy is defined as
$$H_\infty(X|Y) = -\log\Big(\sum_y \Pr(Y=y) 2^{-H_\infty(X|Y=y)}\Big).$$
For two discrete random variables $X, Y$ with the same domain, their statistical distance is $\|X - Y\|_1 = \frac{1}{2}\sum_{x\in \mathcal{D}} |\Pr(X = x) - \Pr(Y = x)|$. For $\eps > 0$, the smoothed min-entropy of a discrete random variable $X$ is defined as 
$$\SmoothMinEntropy{\eps}{X} \,=\, \sup_{\tilde{X},\|\tilde{X} - X\|_1 \leq \eps} \MinEntropy{\tilde{X}},$$
where the supremum is taken over all $\tilde{X}$ defined on $\mathcal{D}$. The smoothed conditional min-entropy is
$$\SmoothMinEntropy{\eps}{X|Y} \,=\, \sup_{(\tilde{X},\tilde{Y}),\|(\tilde{X},\tilde{Y}) - (X,Y)\|_1 \leq \eps} \MinEntropy{\tilde{X}|\tilde{Y}}.$$
We also define the smooth entropy of a random variable $X$, conditioned on an event $T$, as the smooth entropy of a random variable having the distribution of $X$ conditioned on $T$. The following will be useful. 

\begin{claim}\label{claim:cond-min-entropy}
Let $X$ be a random variable, $\eps>0$ and $T$ an event such that $\Pr(T) \geq 1-\delta$. Then $H_\infty^{\eps-2\delta}(X|T) \leq H_\infty^\eps(X)$.
\end{claim}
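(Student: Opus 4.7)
The plan is to argue by triangle inequality, using the observation that the distribution of $X$ conditioned on $T$ is statistically close to the unconditional distribution of $X$ whenever $T$ has high probability. Concretely, writing $P_X$ for the distribution of $X$ and $P_{X|T}$ for the distribution of $X$ conditioned on $T$, the law of total probability gives the decomposition
\[
P_X(x) \,=\, \Pr(T)\,P_{X|T}(x) + \Pr(\neg T)\, P_{X|\neg T}(x),
\]
from which $P_X(x) - P_{X|T}(x) = \Pr(\neg T)\bigl(P_{X|\neg T}(x) - P_{X|T}(x)\bigr)$. Since $P_{X|T}$ and $P_{X|\neg T}$ are both probability distributions, this yields
\[
\|P_X - P_{X|T}\|_1 \,\leq\, \Pr(\neg T) \,\|P_{X|\neg T} - P_{X|T}\|_1 \,\leq\, \Pr(\neg T) \,\leq\, \delta,
\]
which gives us the slack we need (in fact a little more than $2\delta$ of slack, but $2\delta$ will suffice).

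The second step is to translate this distributional closeness into an inequality between smoothed min-entropies. Fix $\eta > 0$ arbitrarily small and pick a distribution $\tilde Y$ on $\mathcal{D}$ with $\|\tilde Y - P_{X|T}\|_1 \leq \eps - 2\delta$ such that $H_\infty(\tilde Y) \geq H_\infty^{\eps - 2\delta}(X|T) - \eta$; such a $\tilde Y$ exists by the definition of smoothed min-entropy as a supremum. By the triangle inequality,
\[
\|\tilde Y - P_X\|_1 \,\leq\, \|\tilde Y - P_{X|T}\|_1 + \|P_{X|T} - P_X\|_1 \,\leq\, (\eps - 2\delta) + \delta \,\leq\, \eps,
\]
so $\tilde Y$ is an admissible ``smoothing'' of $P_X$ in the definition of $H_\infty^\eps(X)$. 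Consequently $H_\infty^{\eps-2\delta}(X|T) - \eta \leq H_\infty(\tilde Y) \leq H_\infty^\eps(X)$. Letting $\eta \to 0$ yields the claim.

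I do not expect any genuine obstacle: the argument is essentially a one-line triangle inequality combined with the convex-combination bound on $\|P_X - P_{X|T}\|_1$. The only point requiring minor care is making sure the slack on the smoothing parameter is tracked correctly (we actually only consume $\delta$ of the available $2\delta$, but the weaker bound as stated is what is needed in the applications). If the definition of smoothed min-entropy used in the paper were instead taken over sub-normalized distributions (as is common in the quantum-information literature), the same argument would carry through verbatim with the same triangle-inequality step.
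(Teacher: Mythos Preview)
Your proof is correct and follows essentially the same approach as the paper's: pick a near-optimal smoothing $\tilde Y$ of the conditional distribution $P_{X|T}$ and show, via the closeness of $P_{X|T}$ to $P_X$, that $\tilde Y$ is also an admissible smoothing of $P_X$. Your presentation is in fact a bit cleaner than the paper's, which carries out the same idea on the underlying probability space (defining $\tilde X = \tilde Y$ on $T$ and extending off $T$), whereas you work directly with distributions on $\mathcal{D}$ and a straightforward triangle inequality; as you note, only $\delta$ of the available $2\delta$ slack is actually consumed.
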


\begin{proof}
Let $Y$ be a random variable having the same distribution as $X$ conditioned on $T$. 
Let $\tilde{Y}$ be a random variable such that $H_\infty(\tilde{Y}) = H_\infty^{\eps-2\delta}(X|T) = H_\infty^{\eps-2\delta}(Y)$  and $\|\tilde{Y}-Y\|_{1,T} \leq \eps-2\delta$, where both quantities are computed on the probability space conditioned on $T$. Define $\tilde{X}=\tilde{Y}$ on $T$, and extend $\tilde{X}$ to the whole probability space in an arbitrary way, under the condition that $H_\infty(\tilde{X}) \geq H_\infty(\tilde{Y})$. Then $\|\tilde{X}-X\|_1 \leq (\eps-2\delta)/(1-\delta)+\delta \leq \eps$, proving the claim.  
\end{proof}

We will make use of the following basic relations between the different entropy measures.
\begin{lemma}
\label{lem:basic_entropy}
	Let $X$ be a discrete random variable over some domain $\X$. Let $\eps \in [0,1)$. Then,
	\begin{enumerate}
		\item $H_{\infty}(X) \leq H(X) \leq H_0(X)$, and
		\item $\SmoothMinEntropy{\eps}{X} \leq H_0(X) - \log(1 - \eps)$.
	\end{enumerate}
\end{lemma}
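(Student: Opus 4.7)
The plan is to prove each of the two inequalities by unwinding the definitions and applying a single elementary observation in each case.

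For part 1, I would handle the two inequalities separately. For $H_\infty(X) \leq H(X)$, let $p^\star = \max_x \Pr(X=x)$ so that $H_\infty(X) = -\log p^\star$. Since $-\log p_x \geq -\log p^\star$ for every $x$ in the support of $X$, summing $p_x \cdot (-\log p_x)$ gives $H(X) \geq -\log p^\star = H_\infty(X)$. For $H(X) \leq H_0(X)$, this is the standard observation that the Shannon entropy is maximized, for a fixed support of size $|\mathrm{supp}(X)| = 2^{H_0(X)}$, by the uniform distribution on that support, with equality giving exactly $H_0(X)$; a one-line proof follows from Jensen's inequality applied to $-\log$.

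For part 2, the strategy is to directly upper bound $H_\infty(\tilde X)$ for any $\tilde X$ within statistical distance $\eps$ of $X$, and then take the supremum. Let $S = \mathrm{supp}(X)$, so that $|S| = 2^{H_0(X)}$. Since $\Pr(X \in S) = 1$, the statistical distance bound yields $\Pr(\tilde X \in S) \geq 1 - \eps$. By averaging, there must exist some $x^\star \in S$ with $\Pr(\tilde X = x^\star) \geq (1-\eps)/|S|$, which gives $H_\infty(\tilde X) \leq \log(|S|/(1-\eps)) = H_0(X) - \log(1-\eps)$. Taking the supremum over admissible $\tilde X$ yields the claim.

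I don't expect a real obstacle here, as both parts reduce to the pigeonhole/averaging argument for the largest probability atom of the distribution. The only subtlety to be careful about is in part 2: one might be tempted to keep $\tilde X$ supported on $S$, but the definition of smooth min-entropy allows $\tilde X$ to place mass outside $S$, which is precisely why one argues via $\Pr(\tilde X \in S) \geq 1-\eps$ rather than conditioning on $\tilde X \in S$. With that noted, both parts are short computations directly from the definitions recalled in the preliminaries.
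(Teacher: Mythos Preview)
Your proposal is correct and follows essentially the same approach as the paper: part~1 is the same ``min versus average'' argument for $H_\infty \leq H$ (the paper phrases it as $H(X)$ being a convex combination of the values $-\log\Pr(X=x)$, whose minimum is $H_\infty(X)$) and a citation/Jensen for $H \leq H_0$; part~2 is the same pigeonhole on the support of $X$, using that any $\tilde X$ with $\|\tilde X - X\|_1 \leq \eps$ must place at least $1-\eps$ mass on $\mathrm{supp}(X)$. Your version of part~2 is arguably slightly cleaner in that you bound $H_\infty(\tilde X)$ for every admissible $\tilde X$ and then take the supremum, rather than assuming the supremum is attained.
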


\begin{proof}
	The first inequality in the first item follows because $H(X)$ is a convex combination of $\{-\log(\Pr(X = x))\}$ values over all $x\in \mathrm{supp}(X)$, and $H_{\infty}(X)$ is the minimum such value. A proof of the inequality $H(X) \leq H_0(X)$ can be found in~\cite[Ch.\ 2]{cover2012elements}. We prove the second item. Let $U = \mathrm{supp}(X)$. Let $\mu = \SmoothMinEntropy{\eps}{X}$, and let $Y$ be a random variable such that $\|Y - X\|_1 \leq \eps$ and $\MinEntropy{Y} = \mu$. Then, for every $u \in U$, $\Pr(Y = u) \leq 2^{-\mu}$ by definition, but we also must have $|U| \cdot 2^{-\mu} \geq 1 - \eps$ because of the statistical distance between $Y$ and $X$. Since $H_0(X) = \log |U|$, the proposition follows.
\end{proof}

We will use some standard concentration bounds (see e.g. Chapter~1 in~\cite{Dubhashi98concentrationof} for a detailed introduction). 

\begin{fact}[Chernoff bound]\label{fact:chernoff}
Let $X_1,\ldots,X_n$ be independent Bernoulli random variables with expectation $p$. Then 
$$\Pr\left[ \Big|\sum_{i=1}^n X_i -pn \Big| \geq \delta pn \right] \,\leq\, 2\,e^{-\delta^2pn/3}.$$
\end{fact}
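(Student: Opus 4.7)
The plan is to prove the two-sided multiplicative Chernoff bound via the standard Cramér--Chernoff (exponential moment) method: I would bound each tail separately by applying Markov's inequality to $e^{tS}$ with $S=\sum_i X_i$ and a carefully chosen $t$, and then combine the two tails with a union bound to obtain the factor of $2$.

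For the upper tail, fix $t>0$. Markov's inequality gives $\Pr[S\geq(1+\delta)pn]\leq E[e^{tS}]/e^{t(1+\delta)pn}$, and by independence of the $X_i$'s together with the elementary bound $1+x\leq e^x$ one obtains $E[e^{tS}] = (1-p+pe^t)^n \leq \exp(np(e^t-1))$. Optimizing over $t$ by setting $t=\ln(1+\delta)$ yields
$$\Pr[S\geq(1+\delta)pn]\leq \exp\bigl(-pn\bigl((1+\delta)\ln(1+\delta)-\delta\bigr)\bigr).$$
An analogous calculation with a negative $t$ (equivalently, applied to the complementary variables $1-X_i$) gives
$$\Pr[S\leq(1-\delta)pn]\leq \exp\bigl(-pn\bigl((1-\delta)\ln(1-\delta)+\delta\bigr)\bigr).$$

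The remaining step is the elementary inequalities $(1\pm\delta)\ln(1\pm\delta)\mp\delta \geq \delta^2/3$ for $\delta\in[0,1]$, which I would verify by Taylor-expanding $f(\delta)=(1+\delta)\ln(1+\delta)-\delta-\delta^2/3$ around $\delta=0$ and checking that $f(0)=f'(0)=0$ while $f''(\delta)=1/(1+\delta)-2/3\geq 0$ on $[0,1/2]$, then handling the remaining interval directly (and symmetrically for the lower-tail function). Applying these bounds to the two displayed estimates and taking a union bound then yields the claimed $2e^{-\delta^2 pn/3}$. The only real subtlety is the implicit restriction $\delta\in[0,1]$ built into the constant $1/3$: in the regime $\delta>1$ the calculus inequality fails and one has to fall back on a looser bound of the form $e^{-\delta pn/3}$. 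Since the Chernoff bound is only invoked in the paper as a standard black-box concentration tool in the small-$\delta$ regime, this restriction is harmless, and I would simply cite the textbook reference~\cite{Dubhashi98concentrationof} rather than reproduce the calculus details in full.
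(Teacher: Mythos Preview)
Your proof sketch is correct and follows the standard Cram\'er--Chernoff exponential-moment argument. The paper itself does not prove this fact at all: it is stated as a black-box ``Fact'' with a reference to~\cite{Dubhashi98concentrationof}, which is exactly what you conclude you would do as well.
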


\begin{fact}[Hoeffding's inequality]\label{fact:hoeffding} Let $X_1,\ldots,X_n$ be independent centered random variables such that for every $i\in [n]$, we have $\Pr(X_i \in [a_i,b_i]\big)=1$ . Then for any $t\geq 0$, 
$$\Pr\left[ \Big|\sum_{i=1}^n X_i  \Big| \geq t \right] \,\leq\, 2\,e^{-2t^2/\sum_i (b_i-a_i)^2}.$$
\end{fact}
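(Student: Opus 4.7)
The plan is to establish the two-sided tail bound via the classical Chernoff / moment-generating-function method, decomposing the argument into an application of Markov's inequality to an exponentiated sum, a use of independence to factor the moment generating function, and a uniform bound on each factor via Hoeffding's lemma.

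First I would fix $\lambda > 0$ and apply Markov's inequality to $e^{\lambda S}$, where $S = \sum_i X_i$:
\[
\Pr\bigl[S \geq t\bigr] \,\leq\, e^{-\lambda t}\, \E\bigl[e^{\lambda S}\bigr] \,=\, e^{-\lambda t}\prod_i \E\bigl[e^{\lambda X_i}\bigr],
\]
where the second equality uses independence of the $X_i$'s.

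The core step is the per-variable estimate $\E[e^{\lambda X_i}] \leq e^{\lambda^2(b_i-a_i)^2/8}$, usually called Hoeffding's lemma. Since $X_i \in [a_i,b_i]$ is centered, I would obtain this by invoking convexity of $x \mapsto e^{\lambda x}$ to write
\[
e^{\lambda x} \,\leq\, \frac{b_i - x}{b_i - a_i}\, e^{\lambda a_i} + \frac{x - a_i}{b_i - a_i}\, e^{\lambda b_i}
\]
for $x \in [a_i,b_i]$, then taking expectations and using $\E[X_i] = 0$. After substituting $p = -a_i/(b_i - a_i)$ and $u = \lambda(b_i - a_i)$, the resulting upper bound takes the form $e^{\phi(u)}$ where $\phi(u) = -pu + \log(1 - p + p e^u)$; a short calculus check ($\phi(0) = \phi'(0) = 0$ and $\phi''(u) \leq 1/4$ uniformly in $u$) then gives $\phi(u) \leq u^2/8$ by Taylor's theorem, which is exactly the desired per-variable estimate.

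Combining the factors and optimizing over $\lambda$ --- the minimizer of the resulting expression is $\lambda^{\star} = 4t / \sum_i (b_i - a_i)^2$ --- yields the one-sided inequality $\Pr[S \geq t] \leq \exp\bigl(-2t^2/\sum_i (b_i - a_i)^2\bigr)$. Applying the same argument to $-S = \sum_i (-X_i)$, which is again a sum of independent centered variables with $-X_i \in [-b_i, -a_i]$ (same interval lengths), gives the matching bound on $\Pr[-S \geq t]$, and a union bound then produces the stated two-sided inequality with its factor of $2$. The only mildly delicate point in the whole proof is the bound $\phi''(u) \leq 1/4$ underlying the sharp constant $1/8$ in Hoeffding's lemma; everything else (Markov, independence, the quadratic optimization in $\lambda$, and the union bound) is routine bookkeeping. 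Since the statement is textbook, one could equally well cite Hoeffding's lemma as a black box --- the plan above just makes the derivation of the explicit constants transparent.
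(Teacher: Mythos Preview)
Your proof is correct and is the standard textbook derivation via the Chernoff method and Hoeffding's lemma. The paper, however, does not prove this statement at all: it is stated as a \emph{Fact} and simply referenced to a standard source on concentration inequalities, so there is no proof in the paper to compare against.
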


\paragraph{Two-player games.} A two-player game $G$ is specified by input alphabets $\mathcal{X}$ and $\mathcal{Y}$, output alphabets $\mathcal{A}$ and $\mathcal{B}$, an input distribution $\pi$ on $\mathcal{X}\times\mathcal{Y}$, and a game predicate $G:\X\times \Y\times \A \times\B \to \{0,1\}$. The game is played between a referee and two non-communicating players, who we typically call Alice and Bob. The referee generates inputs $x\in\mathcal{X}$ and $y\in\mathcal{Y}$ according to $\pi$, and sends them to Alice and Bob respectively. Alice answers with $a\in\mathcal{A}$ and Bob answers with $b\in\mathcal{B}$. The referee accepts iff $G(a,b,x,y)=1$, in which case we say that the players win (or pass) the game. 

\paragraph{Strategies.} Given a game $G$, we define its \emph{value} as the maximum winning probability of two players in the game, where the probability is taken over the referee's choice of inputs and randomness that may be part of the players' strategy. In full generality, a strategy $S$ is specified by a family of distributions $\{p_S(\cdot,\cdot|x,y):\A\times\B\to [0,1]\}_{(x,y)\in\X\times\Y}$, parametrized by input pairs $(x,y)$ and defined over the output alphabet $\A\times \B$. The value of $G$ clearly depends on restrictions that we may place on the allowed families of distributions, and we (as is customary in the study of two-player games in the quantum literature) consider three distinct restrictions: 

First, if the players are restricted to classical deterministic strategies, specified by functions $f_A:\X\to\A$ for Alice and $f_B:\Y\to\B$ for Bob, we obtain the \emph{classical value}, which is defined as 
$$\omega_c(G)\,=\, \max_{f_A,f_B} \sum_{x,y} \pi(x,y)  G(f_A(x),f_B(y),x,y).$$
It is not hard to see that the use of private or even shared randomness by the players will not increase the classical value. Second, by allowing all strategies that may be implemented locally using quantum mechanics, including the use of entanglement, one obtains the \emph{quantum value} of $G$, $\valq(G)$. In this paper we will not need to use the formalism of quantum strategies, and we refer to e.g.~\cite{CHTW04} for a good introduction. Finally, we may allow any strategy which respects the non-signaling principle: the only restriction on the players' family of distributions is that it satisfies
$$ \forall x\in\X,y,y'\in\Y,a\in\A,\qquad p_S(a|x,y)\,=\,\sum_b p_S(a,b\mid x,y)\,=\,\sum_b p_S(a,b\mid x,y')\,=\,p_S(a|x,y'),$$
and a symmetric condition holds when marginalizing over the first players' output. The corresponding value is called the \emph{non-signaling} value $\valns(G)$. It is clear that, for any game $G$, $\omega_c(G)\leq \omega_q(G)\leq \omega_{ns}(G)$. Examples of games are known for which all three inequalities are strict (the $\CHSH$ game, see below). There are also games for which the first inequality is strict, and the second is an equality (the Magic Square game, see below), and for which the first inequality is an equality and the second is strict (see e.g.~\cite{PhysRevLett.99.180502}).

\paragraph{The CHSH game.}
The $\CHSH$ game is a two-player game with two non-communicating players, Alice and Bob, who are given independent random inputs $x,y\in \{0,1\}$ respectively. Their task is to produce outputs $a,b\in\{0,1\}$ such that $a \oplus b = x \wedge y$. By enumerating over all deterministic strategies, it is not hard to see that $\omega_c(\CHSH) = 3/4$. There is a simple quantum strategy based on the use of a single EPR pair that demonstrates $\omega_q(\CHSH) \geq \cos^2(\pi/8) \approx 85\%$, and in fact it is an optimal quantum strategy~\cite{cirel1980quantum,nielsen2010quantum}. Furthermore, $\omega_{ns}(G)=1$ (see Lemma~\ref{nosignalCHSH} for the simple proof). Thus, the $\CHSH$ game is an example of a game $G$ such that $\omega_c(G) < \omega_q(G) < \omega_{ns}(G)$.

\paragraph{(Non-adaptive) Protocols.} Informally, a protocol prescribes the interaction between a trusted \emph{referee} and a pair of \emph{devices}, which we usually denote by $D_A$ and $D_B$. A protocol can be thought of as a multi-round game in which the rounds are played sequentially; we use the word ``devices'' rather than ``players'' to refer to the fact that the interaction may go on for many rounds, but there is no essential difference.  
In this paper, we restrict our attention to \emph{non-adaptive} protocols, where the referee's messages to the devices are independent of the devices' outputs. Formally, a non-adaptive protocol $P$ is specified by a tuple $\langle \X, \Y, \A, \B, R, \pi, T \rangle$, where: $\X, \Y$ are finite input alphabets, $\A, \B$ are finite output alphabets, $R\in \N$ is the number of rounds of interaction, $\pi$ is the input probability distribution over $\X^R \times \Y^R$, and $T:\X^R \times \Y^R \times \A^R \times \B^R \to \{0,1\}$ is the referee's \emph{test}. 

Given such a protocol $P$, the interaction between the referee and a pair of devices $(D_A, D_B)$ proceeds as follows: using private randomness, the referee samples the input sequence $(x,y)\in \X^R\times \Y^R$ from $\pi$. Then, for each round $i\in [R]$, the referee distributes $x_i\in \X$ and $y_i\in \Y$ to $D_A$ and $D_B$, respectively. Devices $D_A$ and $D_B$ are required to produce outputs $a_i\in \A$ and $b_i\in \B$, respectively. Let $a = (a_i)$ and $b = (b_i)$. After $R$ rounds of interaction, the referee \emph{accepts} if $T(x,y,a,b) = 1$. Otherwise, the referee \emph{rejects}.

Given a protocol $P$ and a pair of devices $(D_A, D_B)$, a \emph{strategy} for the devices is a description of their behavior in the protocol: for each round index $i$, a family of distributions $\{p(a_i,b_i | x_i,y_i,\text{hist}_i)\}$ on $\A_i\times \B_i$, where $\text{hist}_i$ is the \emph{history} of the protocol prior to round $i$, i.e. the list of inputs and outputs generated by the devices in previous rounds. We call a strategy quantum (resp. non-signaling) if it can be implemented using isolated quantum (resp. non-signaling) devices. 

\section{Randomness amplifiers}\label{sec:model}

In this section we define the notion of \emph{randomness amplifiers} that we use throughout the paper. A randomness amplifier is given by a family $(P_m)_{m\in \N}$ of non-adaptive protocols. The following definition summarizes the important parameters associated with a non-adaptive randomness amplifier.

\medskip
\begin{definition} A family of protocols $P = (P_m)=\langle \X, \Y, \A, \B, R_m, \pi_m, T_m \rangle$ is a \textbf{randomness amplifier} with \textbf{seed length} $m$, \textbf{completeness} $c=c(m)$, \textbf{soundness} $s=s(m) < c$ against quantum (resp. non-signaling) strategies, \textbf{smoothness} $\eps=\eps(m)$, \textbf{expansion} $g=g(m)$ and \textbf{ideal strategy} $S_{\text{ideal}}=S_{\text{ideal}}(m)$ if the following hold for every $m\in\N$:
\begin{itemize}
	\item (Seed length) A sequence of inputs $(x,y)\in\X^{R_m}\times\Y^{R_m}$ to the devices can be sampled according to $\pi_m$ using at most $m$ uniformly random bits,
	\item (Completeness) If the devices behave as prescribed in the ideal strategy $S_{\text{ideal}}$,\footnote{We refer to devices implementing the ideal strategy as \emph{ideal devices}.}  then
	\begin{equation}\label{eq:honest}
	\Pr(T_m(X,Y,A,B)=1) \geq c(m), 
	\end{equation}
	where $A$, $B$ are random variables corresponding to each device's outputs, and the probability is over $(X,Y)\sim \pi_m$ and the randomness inherent in the strategy. 
	\item (Soundness) For all quantum (resp. non-signaling) strategies $S$ for the devices in $P_m$, if playing according to $S$ guarantees $\Pr(T_m(X,Y,A,B) = 1) \geq s(m)$, then
	$$ \SmoothMinEntropy{\eps}{A,B \mid T_m(X,Y,A,B) = 1} \geq g(m). $$
\end{itemize}
\end{definition}

For notational clarity we will often omit the parameter $m$ when the seed length is clear from context.

We further elaborate on the completeness and soundness conditions. We say that the completeness of a randomness amplifier $P$ holds \emph{with quantum (resp. non-signaling) devices} whenever the ideal strategy can be implemented using quantum (resp. non-signaling) devices. Similarly, we say that the soundness of $P$ holds \emph{against quantum (resp. non-signaling) devices} if the universal quantifier in the soundness condition is over all quantum (resp. non-signaling) strategies. Generally, a stronger condition on the soundness (i.e. soundness against non-signaling devices) will imply weaker parameters, such as smaller expansion. 

We note that the amount of randomness produced is measured according to its ($\eps$-smooth) min-entropy. Motivation for this particular measure comes from the fact that it tightly characterizes the number of ($\eps$-close to) uniform bits that can be extracted from the devices' outputs using a procedure known as an extractor (we refer to~\cite{Ren05} for more details on using extractors for privacy amplification, including in the quantum setting). This procedure requires the use of an additional short seed of uniformly random bits, which we do not take into account here: our goal is simply to produce \emph{entropy}, and one could in principle replace the min-entropy by, say, the Shannon entropy in the definition. We also observe that the conditioning on $T(X,Y,A,B)$ in the definition of the soundness is necessary. 
Indeed, consider devices applying the following strategy: first, flip a biased coin that is heads with probability $p$. If the coin comes up heads, deterministically output $0^R$ (a strategy which we may assume will fail the referee's test with high probability over his choice of inputs). Otherwise, apply the ideal strategy specified in the protocol. The probability that the devices pass the protocol is at least $(1-p)c$ (where $c$ is the completeness parameter of the protocol), which is larger than the soundness $s$ as long as $p<1-s/c$, a value larger than $1/2$ for any reasonable setting of $s$ and $c$. For any $\eps<p$ the $\eps$-smooth min-entropy of the device's outputs is at most $\log (1/p)$; it is (potentially) large only once one conditions on success. 

It may be useful to keep typical ranges for the different parameters in mind. The ``asymptotic'' quantity is the seed length $m$. Completeness will often be exponentially close to $1$ in the number of rounds $R$, itself a function of $m$ that can range from linear to doubly exponential (or more). The soundness and smoothness will be exponentially small in $m$.

We now define restricted classes of protocols which capture most of the protocols so far introduced in the literature. The definitions are extended to randomness amplifiers in the natural way. 

\textbf{Natural protocols.} We will say that a protocol $P$ is \emph{natural} if there is a two-player game $G$ such that the ideal strategy for $P$ is the strategy $S_G^{\otimes R}$ consisting of playing each of the $R$ rounds of $P$ according to an optimal (quantum or non-signaling depending on the context) strategy $S_G$ for the game $G$. We say that $G$ is the game that \emph{underlies} $P$. All randomness amplifiers to date are natural according to this definition. In this paper we only consider natural protocols. 

\begin{definition}\label{def:prod_test}
Let $G$ be a two-player game. A test function $T:\X^R\times \Y^R \times \A^R \times \B^R \to \{0,1\}$ is a \textbf{product test} with respect to $G$ iff there exists a function $g:\{0,1\}^R \to \{0,1\}$ such that $T(x,y,a,b) = g \left ( G(x_1,y_1,a_1,b_1),\ldots,G(x_R,y_R,a_R,b_R)  \right)$.
\end{definition}

\textbf{Product protocols.} We will say that a  
protocol $P$ is a \emph{product protocol} if the referee's test $T$ is a product test with respect to some two-player game $G$.  Intuitively, the protocol $P$ consists of $R$ independent instances of the game $G$, played in sequence (though the input distribution may not necessarily be the product distribution $\pi_G^{\otimes R}$). The referee's test is to apply a function $g$ on the sequence of wins and losses of the devices. Natural examples of functions $g$ for this purpose include the $\mathsf{AND}$ function and threshold functions, e.g. $g(w) = 1$ iff the Hamming weight of $w\in\{0,1\}^R$ is greater than $(\omega_q(G) - \eta)R$. An example of a \emph{non}-product test would be one where, say, the referee checks that the devices output $(0,0)$ (for a given input pair) in $\frac{1}{2} \pm \epsilon$ fraction of the rounds.

\textbf{Robust protocols.} Informally, a protocol is robust if small deviations from an ideal strategy are still accepted with high probability by the referee. We now provide a formal definition for such protocols. First, we introduce the notion of closeness of strategies. Let $P$ be an $R$-round protocol. Let $X,Y$ be random variables on $\X^R$, $\Y^R$ respectively distributed according to the protocol's input distribution $\pi_P$. For any strategy $S$, let $S_i(X_{\leq i},Y_{\leq i})$ denote the random variable distributed as the devices' outputs in round $i$, conditioned on having played according to $S$ on the input sequence $(X_{\leq i},Y_{\leq i})$. Then we say that two strategies $S$ and $\widehat{S}$ are $\eta$-close if for all rounds $i\in [R]$, 
$$
	\big\|S_i(X_{\leq i},Y_{\leq i}) - \widehat{S}_i(X_{\leq i},Y_{\leq i})\big\|_1 \leq \eta.
$$

Let $P$ be a protocol with some specified ideal strategy $S_{\mathrm{ideal}}$ that is accepted with probability at least $c$ in the protocol (as is when $P$ is a member of a randomness amplifier, for example). Let $T$ be the referee's test in the protocol. We say that $P$ is $\eta$-\emph{robust} if whenever the devices' strategy $S$ for the protocol $P$ is $\eta$-close to $S_{\text{ideal}}$, it holds that $\Pr(T(X,Y,A,B) = 1) \geq c$ (under strategy $S$). We note that this definition captures the concept of robustness against not only, say, i.i.d. noise, but also against physically plausible sources of imperfection such as misaligned mirrors, imperfect detectors, etc.

\section{Lower bounds}\label{sec:lower}
Let $G$ be a two-player game in which inputs to Alice (resp. Bob) are chosen from sets $\mathcal{X}$ (resp. $\mathcal{Y}$), and answers expected in sets $\mathcal{A}$ (resp. $\mathcal{B}$).  Let $\pi$ be the referee's distribution on input pairs in $G$.

\medskip
\begin{definition}\label{def:rg}
We say that a two-player game $G$ is $(p_0,\eta,1-\xi)$-randomness generating against quantum (resp. non-signaling) players if there exists an input $x_0\in\mathcal{X}$ such that the marginal probability $\pi(x_0)\geq p_0$ and any quantum (resp. non-signaling) strategy for the players that has success at least $\omega_q(G)-\eta$ (resp. $\omega_{ns}(G)-\eta$) satisfies
\begin{equation}\label{eq:rg}
 \max_{a\in\mathcal{A}} p\big(A=a \mid X=x_0\big) \leq 1-\xi.
\end{equation}
\end{definition}

We note that for any given game $G$, $x_0$ and $\eta>0$ the problem of approximating the smallest possible $\xi$ such that $G$ is $(\pi(x_0),\eta,\xi)$-randomness generating against quantum (resp. non-signaling) devices is an optimization problem for which upper bounds can be obtained through a hierarchy of semidefinite programs~\cite{Doherty08,Pironio} (resp. a linear program). If $G$ is an XOR game, the hierarchy converges at the first level: there is an exact semidefinite program of size polynomial in $|\mathcal{X}||\mathcal{Y}|$. For the special case of the $\CHSH$ game, choosing $x_0=0$ it is known that $\CHSH$ is $(1/2,\eta,1/2+\sqrt{3\eta})$-randomness generating (see Claim~\ref{claim:chsh-rg}). In Claim~\ref{claim:ms-rg} we show that the Magic Square game is $(1/9,\eta,12/13+\eta)$-randomness generating. Clearly, the condition that $\eta< \omega_q(G)-\omega_c(G)$ (resp. $\eta< \omega_{ns}(G)-\omega_c(G)$) is necessary for the game $G$ to be randomness generating for any $\xi> 0$.

\subsection{Unbounded randomness expansion}

For any game $G$ with input distribution $\pi$, $\eps>0$ and function $R:\N \to \N$, we introduce a simple randomness amplifier that achieves unbounded expansion, with the strong limitation that soundness only holds against devices that are restricted to play each round of the protocol in a completely isolated, though not necessarily identical, manner (in particular, the devices are memory-less but may be aware of the round number). Fix an optimal strategy $S$ for $G$. Our randomness amplifier is given by the family of protocols $(P_m)$, where protocol $P_m$ is defined as follows.

$P_m$ has $R = R(m)$ rounds. The rounds are divided into $(1/\eps)$ blocks $B_j$ of $\eps\, R$ rounds each. For each block, the referee chooses a random pair of inputs $(x,y) \sim \pi$ that is used in every round of the block. The referee then checks that in every block at least a $\omega_q(G|S,x,y)-\eta$ fraction of the rounds have been won, where here $\omega_q(G|S,x,y)$ is defined as the probability that the players satisfy the game condition, conditioned on their inputs being $(x,y)$, in the fixed strategy $S$ (so that $\sum_{x,y}\pi(x,y)\omega_q(G|S,x,y) = \omega_q(G)$). (In the non-signaling case, replace $\omega_q$ by $\omega_{ns}$.) The referee accepts the devices if and only if this condition holds in every block. Note that $P$ is a non-adaptive protocol with ideal strategy $S^{\otimes R}$, completeness that goes exponentially fast to $1$ with $R$, and seed length $O(\eps^{-1})$ (where we treat the size of $G$ as a constant). 

The following lemma shows that the randomness amplifier $(P_m)$ has good soundness and a constant rate. Since the seed length remains a constant as $R(m)$ grows, the protocol can be used to achieve unbounded expansion. 

\begin{lemma}\label{lem:infinite}
Let $\eta,\xi>0$ and $G$ a $(p_0,4\eta,1-\xi)$-randomness generating game against quantum (resp. non-signaling) players. Then, for all $\eps>0$ and functions $R:\N \to \N$ the above-described randomness amplifier $(P_m)$ has 
\begin{enumerate}
	\item Seed length $O(\eps^{-1})$,
	\item Completeness $1-e^{-\Omega(\eps R(m))}$ with quantum (resp. non-signaling) devices, 
	\item Soundness $e^{-\Omega(1/\eps)}$ against \emph{independent} quantum (resp. non-signaling) devices,
	\item Smoothness $e^{-\Omega(1/\eps)}$, and
	\item Expansion $g(m) = \alpha R(m)$, where $\alpha$ is a positive constant depending only on $\xi$ and $\eta$. 
\end{enumerate}
Furthermore, $P$ is $\eta$-robust.
\end{lemma}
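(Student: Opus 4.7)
The proof divides into routine parts (seed length, completeness, robustness) and the main soundness argument. I sketch it for the quantum case; the non-signaling case is identical after replacing $\omega_q$ by $\omega_{ns}$ throughout.

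For the routine parts, the seed length is $O(1/\eps)$ because each of the $1/\eps$ blocks needs $O(\log|\X\times\Y|)=O(1)$ bits to sample its input pair from $\pi$. For completeness, under $S^{\otimes R}$ each round of block $j$ with input $(x,y)$ independently wins with probability $\omega_q(G|S,x,y)$, so by Chernoff (Fact~\ref{fact:chernoff}) the fraction of wins concentrates within $\eta$ of this value except with probability $e^{-\Omega(\eta^2\eps R)}$; a union bound over the $1/\eps$ blocks yields completeness $1-e^{-\Omega(\eps R)}$. The same analysis proves $\eta$-robustness, since an $\eta$-close strategy perturbs per-round win probabilities by at most $\eta$, which is absorbed by the test's $\eta$ slack in the threshold.

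For soundness, fix any memoryless independent strategy $(S^{(i)})_{i=1}^R$ whose pass probability is at least $s=e^{-c/\eps}$ for a constant $c>0$ to be chosen. By memorylessness the blocks are independent, so $\prod_j p_j \geq s$ (where $p_j$ is the pass probability of block $j$), giving $\sum_j\log(1/p_j)\leq c/\eps$. Define $s_i := \E_{(x,y)\sim\pi}[\omega_q(G|S^{(i)},x,y)]$ and $a_j := \frac{1}{\eps R}\sum_{i\in B_j} s_i$, and call round $i$ \emph{good} if $s_i\geq \omega_q(G)-4\eta$, which enables the game's randomness-generating property.

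The main (and only challenging) step is to show that a constant fraction of the $R$ rounds are good. I carry this out in two stages. First, I argue $a_j\leq \omega_q(G)-2\eta$ implies $p_j \leq 1-\Omega(\eta)$: conditionally on the block's input $(x_j,y_j)$, the per-block mean $M_j(x_j,y_j) = \frac{1}{\eps R}\sum_{i\in B_j}\omega_q(G|S^{(i)},x_j,y_j)$ is within $\eta$ of the empirical win rate $W_j$ by Hoeffding (Fact~\ref{fact:hoeffding}), so $p_j$ is upper bounded by the $\pi$-probability that $M_j(x,y)\geq T_j(x,y):=\omega_q(G|S,x,y)-\eta$, plus a Chernoff tail. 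Applying Markov to $1-(T_j-M_j)$, whose expectation is $1-(\omega_q(G)-\eta-a_j)\leq 1-\eta$, bounds this probability by $1-\Omega(\eta)$. Plugging into $\sum_j\log(1/p_j)\leq c/\eps$ and choosing $c$ a small constant multiple of $\eta$ forces fewer than half of the blocks to be ``bad''. Second, in each remaining good block, the bound $s_i\leq\omega_q(G)$ combined with $a_j\geq \omega_q(G)-2\eta$ forces (by simple averaging) at least $\eps R/2$ rounds to satisfy $s_i\geq\omega_q(G)-4\eta$. Summing yields $\Omega(R)$ good rounds overall.

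For the entropy accounting, applying the $(p_0,4\eta,1-\xi)$-randomness-generating property to each good $S^{(i)}$ gives $\max_a p_{S^{(i)}}(a\mid x_0)\leq 1-\xi$, so $H_\infty(A_i\mid X_i)\geq -\log(1-p_0\xi)$. Memoryless independence makes the $(A_i,B_i)$ conditionally independent given $X$, so min-entropies add: $H_\infty(A\mid X) = \Omega(R)$. Finally a direct maximum-probability calculation yields $H_\infty(A\mid T{=}1)\geq H_\infty(A\mid X,T{=}1) \geq H_\infty(A\mid X)-\log(1/\Pr[T{=}1])\geq \Omega(R)-O(1/\eps)=\Omega(R)$ once $R$ is large enough; since the non-smoothed min-entropy lower bounds the $\eps$-smooth one, expansion $\alpha R$ follows with a constant $\alpha$ depending on $\xi$ and $\eta$, and the smoothness parameter $e^{-\Omega(1/\eps)}$ comfortably absorbs the Chernoff tails and the $\log(1/s)$ slack from conditioning on pass.
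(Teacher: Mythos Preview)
Your overall plan matches the paper's sketch: show that if the strategy passes with probability at least $s=e^{-c/\eps}$ then at least half the blocks must have $a_j\geq\omega_q(G)-2\eta$, deduce that each such block contains at least $\eps R/2$ rounds with $s_i\geq\omega_q(G)-4\eta$, and harvest entropy from those rounds. Your route to $p_j\leq 1-\Omega(\eta)$ for a bad block (Markov applied to $1-(T_j-M_j)$) differs from the paper's --- the paper instead exhibits a single witness input $(x^j,y^j)$ on which the block's mean falls below threshold and notes that this input is drawn with probability at least $f=\min_{(x,y):\pi(x,y)>0}\pi(x,y)$ --- but both arguments reach the same conclusion, and yours is arguably cleaner.

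The entropy accounting, however, has a genuine gap. You claim that ``min-entropies add'' to give $H_\infty(A\mid X)=\Omega(R)$ from the per-round bounds $H_\infty(A_i\mid X_i)\geq -\log(1-p_0\xi)$, but that additivity requires the $X_i$ to be \emph{independent}, whereas in this protocol they are identical within each block. Concretely, take a game $G$ whose optimal strategy is deterministic on every Alice-input except $x_0$ (nothing in Definition~\ref{def:rg} rules this out). Playing this strategy in every round passes the test with probability $\geq c$, yet whenever no block draws Alice-input $x_0$ --- an event of probability $(1-p_0)^{1/\eps}$ --- the outputs are fully deterministic, so $H_\infty(A\mid X)\leq (1/\eps)\log\frac{1}{1-p_0}=O(1/\eps)$ rather than $\Omega(R)$, and your final chain gives nothing. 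The smoothing is not just absorbing tails; it is doing the main work. The fix is to condition on the event $E$ that at least a $p_0/2$-fraction of the good blocks receive Alice-input $x_0$ (a Chernoff bound over the independent \emph{block} inputs gives $\Pr[\neg E]\leq e^{-\Omega(p_0^2/\eps)}$); on $E$ one has the pointwise bound $H_\infty(A\mid X=x)\geq \Omega(p_0 R)\cdot(-\log(1-\xi))$ for every $x$, and now round-wise additivity (for fixed $x$) is legitimate. Smoothing away $\neg E$ and then running your final chain yields $g(m)=\alpha R$. The paper's sketch only gestures at this step (``provided the input in that round is $(x_0,y_0)$ --- which happens with constant probability''), so once patched your argument would in fact be more explicit than the original on this point.
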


\begin{proof}
The argument is simple and makes heavy use of the independence assumption; we only sketch it here. We do the proof in the quantum case; the non-signaling setting is similar. For each round $i$ and pair of inputs $(x,y)$ let $p_i(x,y)$ be the $i$-th round devices' success probability in game $G$, when the inputs are deterministically fixed to $(x,y)$. Consider a fixed block $B_j\subseteq [R]$ of $\eps R$ rounds, and suppose that in that block it holds that 
\begin{equation}\label{eq:inf-1}
\frac{1}{\eps R}\sum_{i\in B_j}\, \textsc{E}[p_i(x,y)] \,\leq\, \omega_q(G)-2\eta,
\end{equation}
 where the expectation is taken according to the input distribution $\pi$ in game $G$. Then there must exist a pair of inputs $(x^j,y^j)$ such that $(1/(\eps R))\sum_{i\in B_j} p_i(x^j,y^j) \leq \omega_q(G|S,x^j,y^j)-2\eta$. For any $i\in [R]$ let $(X_i,Y_i)$ be random variables denoting the referee's choice of inputs to the devices in round $i$, and $Z_i$ a binary random variable that is $1$ if and only if the game is won in round $i$. By definition $\textsc{E}[Z_i\mid (X_i,Y_i)=(x^j,y^j)]=p_i(x^j,y^j)$. Applying Hoeffding's inequality (see Fact~\ref{fact:hoeffding}), conditioned on the input to block $B_j$ being chosen as $(x^j,y^j)$ it holds that
$$ \Pr\Big( \frac{1}{\eps R}\sum_{i\in B_j} Z_i \geq \omega_q(G)-\eta \Big) \,\leq\, e^{-\Omega(\eta^2 \eps R)}.$$
Let $f = \min_{(x,y):\pi(x,y)>0} \pi(x,y)$; $f$ is a constant depending only on $G$. In any block $B_j$ the probability that the input to the block is $(x^j,y^j)$ is at least $f$. Since the inputs to different blocks are chosen independently, applying Hoeffding's inequality once more the probability that less than a fraction $f/2$ of blocks $B_j$ have their input set to $(x^j,y^j)$ is at most $e^{-\Omega(f^2/\eps)}$. As a result, except with probability exponentially small in $1/\eps$ a constant fraction of the blocks $B_j$ are such that~\eqref{eq:inf-1} does not hold. In particular, at least half of rounds $i$ in any such block must be such that $p_i := \textsc{E}[p_i(x,y)] \geq \omega_q(G)-4\eta$, where we used $p_i\leq \omega_q(G)$, by definition of the optimum $\omega_q$. Using the definition of a randomness generating game, provided the input in that round is $(x_0,y_0)$ --- which happens with constant probability ---  the outputs produced by the devices in that round must contain a constant amount of entropy.
\end{proof}

\subsection{Exponential randomness expansion}

It is much more realistic to assume that the devices \emph{do} have memory, and we analyze this case for the remainder of the section. For any game $G$ that is randomness generating we show that there exists a corresponding randomness amplifier with exponential expansion. For simplicity we only consider quantum strategies; the non-signaling setting is completely analogous. We introduce a randomness amplifier $(P_m)$ which is parametrized by a randomness generating game $G$, a fixed set of inputs $(x_0,y_0)\in\mathcal{X}\times\mathcal{Y}$, an error tolerance $\eta>0$, a precision $\eps=\eps(m)$, a ``checking probability'' $p_c=p_c(m)$ and a number of rounds $R=R(m)$.

Fix an $m\in \N$. We first describe the input distribution in $P_m$. Let $w_{max}=2p_cR$, and $\mathcal{U} \subseteq \{0,1\}^R$ the set of binary strings with Hamming weight at most $w_{max}$. Let $q$ be the distribution on $\{0,1\}^R$ with density $q(x)=\prod_{i\in [R]} p_c^{x_i}(1-p_c)^{1-x_i}$. Let $\delta$ be a precision parameter and $q_\delta$ be defined on $\{0,1\}^R$ by $q_\delta(x) = (\delta/ R^{w_{max}}) \lfloor q(x)( R^{w_{max}}/\delta) \rfloor$ if $x\in\mathcal{U}$ and $q_\delta(x)=0$ otherwise. Clearly $\|q_\delta\|_1\leq 1$; normalize $q_\delta$ by introducing an additional ``fail'' symbol $\perp$ such that $q_\delta(\perp)=1-\sum_{x\in\mathcal{U}} q_\delta(x)$.  We think of $q_\delta$ as a discretized version of $q$; the following claim will be useful. 

\begin{claim}\label{claim:sample-bin}
Assume $\delta > 2e^{-p_c R/3}$. Then $q_\delta$ is supported on $\mathcal{U} = \{x\in\{0,1\}^R:\, |x|\leq 2p_cR\}$, $\|q-q_\delta\|_1\leq 2\delta$, and it is possible to sample from $q_\delta$ using $O(p_c R \log(R))$ uniformly random bits. 
\end{claim}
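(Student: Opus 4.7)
The plan is to address the three assertions separately. The support claim is immediate from the definition, which sets $q_\delta(x) = 0$ for all $x \in \{0,1\}^R \setminus \mathcal{U}$ and places all slack on $\perp$; hence the non-$\perp$ support of $q_\delta$ lies in $\mathcal{U}$.

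For the statistical distance I will split $\|q - q_\delta\|_1 = \tfrac{1}{2}\sum_x |q(x) - q_\delta(x)|$ into three contributions, corresponding to $x \in \mathcal{U}$, $x \in \{0,1\}^R \setminus \mathcal{U}$, and $x = \perp$. For $x \in \mathcal{U}$, the floor in the definition guarantees pointwise $0 \leq q(x) - q_\delta(x) \leq \delta/R^{w_{max}}$; combined with the binomial estimate $|\mathcal{U}| = \sum_{k \leq w_{max}} \binom{R}{k} \leq (w_{max}+1)\binom{R}{w_{max}} \leq R^{w_{max}}$ (valid in the relevant regime where $w_{max} \geq 3$ and $w_{max} \leq R/2$), the sum of these pointwise bounds is at most $\delta$. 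For $x \in \{0,1\}^R \setminus \mathcal{U}$ the contribution equals $\Pr_{X \sim q}[|X| > 2 p_c R]$; since $|X|$ is a sum of $R$ i.i.d.\ Bernoulli$(p_c)$ variables of mean $p_c R$, Fact~\ref{fact:chernoff} (applied with relative deviation $1$) gives a bound of $2 e^{-p_c R/3}$. For $x = \perp$, the normalization identity $q_\delta(\perp) = 1 - \sum_{x \in \mathcal{U}} q_\delta(x) = \sum_{x \in \mathcal{U}}(q(x) - q_\delta(x)) + \sum_{x \notin \mathcal{U}} q(x)$ shows that the $\perp$-contribution equals the sum of the previous two. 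Adding everything and halving yields $\|q - q_\delta\|_1 \leq \delta + 2 e^{-p_c R/3} \leq 2\delta$ by the hypothesis $\delta > 2 e^{-p_c R/3}$.

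For the sampler, $q_\delta$ is a distribution with $|\mathcal{U}| + 1$ atoms whose probabilities are all nonnegative integer multiples of $\delta/R^{w_{max}}$. Setting $N = \lceil R^{w_{max}}/\delta \rceil$ (rounded up to a power of $2$, absorbing any slack into $\perp$, which only affects the distance bound by a constant factor), I partition $\{1,\ldots,N\}$ into consecutive intervals of lengths $N q_\delta(x)$, one per atom, and sample $q_\delta$ by drawing a uniformly random element of $\{1,\ldots,N\}$ and returning the label of the interval it falls into. This uses $\log_2 N = w_{max} \log_2 R + \log_2(1/\delta) + O(1)$ uniformly random bits; with $w_{max} = 2 p_c R$ and $\log_2(1/\delta) < p_c R/3$ (from the hypothesis on $\delta$), the total simplifies to $O(p_c R \log R)$. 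The main subtlety in the whole proof is the statistical-distance calculation: the identity expressing $q_\delta(\perp)$ as the sum of the discretization error on $\mathcal{U}$ and the tail mass outside $\mathcal{U}$ is what collapses a three-way decomposition into a two-way one, at which point the hypothesis $\delta > 2 e^{-p_c R/3}$ is precisely what is needed to merge both terms into $2\delta$.
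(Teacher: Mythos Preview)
Your proof is correct and follows essentially the same approach as the paper's. The statistical-distance argument is identical in substance (pointwise floor error on $\mathcal{U}$ plus Chernoff tail outside $\mathcal{U}$); you are simply more explicit about the $\perp$ contribution, whereas the paper leaves that implicit. For the sampler, the paper takes a slightly different but equivalent route: it first samples the Hamming weight $w \in \{0,\ldots,w_{max}\}$ from the induced discretized weight distribution (using $O(p_c R + \log(R^{w_{max}}/\delta))$ bits), then samples a uniformly random string of weight $w$ (using $w\log R$ bits), exploiting the fact that $q_\delta$ is constant on weight classes. Your single uniform draw from $\{1,\ldots,N\}$ achieves the same $O(p_c R \log R)$ bound more directly.
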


\begin{proof}
By definition, for any $x\in\mathcal{U}$, $|q(x)-q_\delta(x)|\leq \delta /|\mathcal{U}|$, where we used $|\mathcal{U}|\leq R^{w_{max}}$. Using the Chernoff bound (Fact~\ref{fact:chernoff}), under $q$ it holds that $\Pr_q(x\notin\mathcal{U}) \leq 2e^{-p_c R/3}<\delta$. Overall, $\|q-q_\delta\|_1\leq 2\delta$ as claimed. 
To sample from $q_\delta$, first sample a weight $w\in \{0,\ldots, w_{max}\}$. Using the discretized form of $q_\delta$ this can be done using $O(p_cR+\log(R^{w_{max}}/\delta))$ bits. Then sample a uniformly random string of weight $w$ using at most $w\log(R)$ bits.  
\end{proof}

The protocol $P_m$ proceeds as follows. The referee first samples a string $u\in\{0,1\}^R$ distributed according to $q_{\eps^2/4}$. He then selects inputs for the devices in the $R$ rounds. If $u_i=1$ inputs are selected as prescribed in $G$; such rounds are called ``game rounds''. If $u_i=0$ they are set to the default value $(x_0,y_0)$. Once inputs to the $R$ rounds have been computed, the referee sequentially provides them to the devices, who produce a corresponding sequence of outputs. The referee computes the average number of rounds in which the input/output pairs satisfy the game condition $G$, and accepts if and only if it is at least $\omega_q(G)-\eta$. We note that $P_m$ is a \emph{natural}, \emph{product} protocol for which we define the ideal strategy to consist of playing each round independently according to an optimal quantum strategy for the game $G$. With that ideal strategy, the protocol is also $\eta$-robust. 

The following theorem shows that for any game $G$ that is $(p_0,\eta,1-\xi)$-randomness generating against quantum adversaries,\footnote{For simplicity we focus here on establishing completeness and soundness for quantum devices, but our arguments can easily be extended to the non-signaling case.} for some $\xi>0$, the protocols $(P_m)$ form a randomness amplifier with exponential expansion. 

\begin{theorem}
Let $G$ be $(p_0,4\eta/p_0,1-\xi)$-randomness generating against quantum players, with input distribution $\pi$. Let $m_\pi$ be the number of uniform random bits required to sample a pair of inputs $(x,y)\sim\pi$. Let $p_c,R,\eps,s:\N\to\N$ be non-negative functions such that $ p_c(m) R(m)(\log R(m)+m_\pi)\leq m/C$, $\eps(m) \leq s(m)$, and $s(m) \eps(m) > e^{-C \min(\eta^2,p_0\xi^2) p_c(m) R(m)}$ for all $m$, where $C$ is a universal constant. Then the family of protocols $(P_m)$ (as defined above), based on game $G$, inputs $(x_0,y_0)$, error tolerance $(p_0\eta/4)$, precision $\eps$, checking probability $p_c$ and number of rounds $R$ is a randomness amplifier with
\begin{enumerate}
	\item Seed length $m$,
	\item Completeness $c\geq 1-e^{-\eta^2 R(m)}$ with quantum devices,
	\item Soundness $s$ against quantum devices,
	\item Smoothness $\eps$, and
	\item Expansion $g(m)\geq \xi R(m)/5$. 
\end{enumerate}
Furthermore, $(P_m)$ is $\delta$-robust for any $\delta<p_0\eta/4$.
\end{theorem}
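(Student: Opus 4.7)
The plan is to verify each of the five claimed properties in turn; seed length, completeness, and robustness are routine, while soundness is the main content. For seed length, Claim~\ref{claim:sample-bin} lets us sample $u \sim q_{\eps^2/4}$ using $O(p_c R \log R + \log(1/\eps))$ bits, and at most $w_{max} = 2p_c R$ game-round inputs each need $m_\pi$ bits to sample from $\pi$; the soundness constraint $s\eps > e^{-C \min(\eta^2, p_0 \xi^2) p_c R}$ gives $\log(1/\eps) = O(p_c R)$, so the total is $O(p_c R(\log R + m_\pi)) \leq m$ by hypothesis. Completeness follows from Hoeffding concentration (Fact~\ref{fact:hoeffding}): the ideal strategy $S^{\otimes R}$ wins each round with expected probability $\omega_q(G)$, so the observed winning fraction is within $p_0\eta/4$ of $\omega_q(G)$ except with probability $e^{-\Omega(\eta^2 R)}$. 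Robustness is immediate: a $\delta$-close strategy shifts each round's winning probability by at most $\delta < p_0\eta/4$, preserving the acceptance margin.

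For soundness, the critical observation is that the devices see only their inputs $(X_i, Y_i)$ and not the switch variable $u_i$, so their output distribution conditioned on input $(x_0, y_0)$ and history $H_i = (X_{<i}, Y_{<i}, A_{<i}, B_{<i})$ is identical whether round $i$ is a game round or a non-game round. For each history $h_i$, define the virtual conditional game-winning probability $p_i(h_i) = \Pr[G(X_i, Y_i, A_i, B_i) = 1 \mid H_i = h_i, u_i = 1]$; call round $i$ \emph{good} if $p_i(h_i) \geq \omega_q(G) - 4\eta/p_0$. Assuming $\Pr(T) \geq s$ and applying Hoeffding concentration to the $\Theta(p_c R)$ game rounds, the average of $p_i(h_i)$ on these rounds must be at least $\omega_q(G) - p_0\eta/4$ conditioned on $T$ (up to an error absorbed by $\eps$), and a Markov argument then shows that a constant fraction of rounds are good.

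By the randomness-generating property of $G$ (Definition~\ref{def:rg}), on any good round where the actual input is $(x_0, y_0)$ the conditional output distribution has min-entropy at least $\log(1/(1-\xi)) \geq \xi/\ln 2$; by the indistinguishability observation this applies whether or not the round is a game round. Since the input is $(x_0, y_0)$ with probability at least $(1 - p_c) + p_c p_0 \geq p_0$ in any round (averaging over $u$ and $\pi$), there are $\Omega(R)$ rounds each contributing entropy $\Omega(\xi)$ to the output. A smooth min-entropy chain rule combined with Claim~\ref{claim:cond-min-entropy} (to pass from conditioning on $T$ to unconditional smooth min-entropy) then yields $H_\infty^\eps(A, B \mid T) \geq \xi R / 5$ for an appropriate setting of constants.

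The main obstacle is quantitatively chaining per-round entropy bounds while conditioning on the non-trivial acceptance event $T$. Since the min-entropy chain rule is a one-shot inequality, we must use the smooth version and carefully distribute the smoothing parameter $\eps$ across the Markov averaging step, the Hoeffding concentration, and the final summation over rounds. The hypothesis $s\eps > e^{-C \min(\eta^2, p_0 \xi^2) p_c R}$ is precisely what budgets all these error terms: the two branches of the $\min$ correspond respectively to the concentration of winning fractions (the $\eta^2$ term) and to the concentration of per-round entropy contributions across rounds (the $p_0\xi^2$ term).
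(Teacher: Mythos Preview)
Your approach is fundamentally different from the paper's and has two genuine gaps. The paper argues by \emph{contrapositive}: it assumes $H_\infty^{\eps}(A,B\mid X,Y,T)<\delta R$ and $\Pr(T)\geq s$, fixes a single high-probability transcript $(x,y,a,b,u)$ witnessing the low min-entropy, and uses Bayes' rule (the exact chain rule for $\log$-probabilities) to deduce that along this transcript most rounds have $\Pr(A_i=a_i\mid X_i=x_0,\text{hist}_i)$ close to~$1$. A separate martingale argument (Azuma, not Hoeffding) shows that along the same transcript the conditional winning probabilities average to near $\omega_q(G)$. Combining, there is a single round~$i$ where the conditioned devices form a one-shot quantum strategy that wins with probability $\geq \omega_q(G)-4\eta/p_0$ yet outputs a nearly deterministic~$A$, directly contradicting the $(p_0,4\eta/p_0,1-\xi)$-randomness-generating hypothesis.

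Your forward approach tries to sum per-round entropy contributions via a ``smooth min-entropy chain rule,'' but no such black-box rule exists: min-entropy does not chain additively, and the smooth version is even more delicate. The only way to make your step~6 rigorous is to fix a candidate output string $(a,b)$, factor $\Pr((A,B)=(a,b)\mid X,Y)$ via Bayes, and bound the factors on good rounds---which is exactly the paper's contrapositive computation. Second, your use of Hoeffding to pass from ``empirical win fraction high'' (the event $T$) to ``average of $p_i(h_i)$ high'' is incorrect: the win indicators $Z_i$ are not independent but merely adapted to the history filtration, so you need Azuma on the martingale $V_k=\sum_{j\leq k}(Z_j-\E[Z_j\mid Z_{<j}])$, as the paper does in its Claim~4.5. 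Finally, your transfer from ``many game rounds are good'' to ``many rounds are good'' via indistinguishability is not free: goodness of round $i$ is a function of the random history $h_i$, whose distribution is correlated with $u_{<i}$; the paper handles this (in the other direction) by a separate Hoeffding bound exploiting that $u_i$ is independent of $h_i$ conditioned on $X_i=x_0$.
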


For any small constant $\eta>0$, integer $m$ and desired soundness and smoothness $\eps=s$, setting $R(m) = C'm/\log(1/\eps)$ and $p_c = C''\log(1/\eps)/R$ for small enough $C'$ and large enough $C''$ (depending on $\eta$, $p_0$ and $\xi$) will lead to parameters that satisfy the theorems' assumptions, thus guaranteeing an amount of min-entropy generated that is exponential in $m$ for constant $\eps$. 

The claim on the completeness in the theorem follows by a standard Chernoff bound. The claim on the seed length follows immediately from the description of the referee given above and the bound in Claim~\ref{claim:sample-bin}. Finally, the claims on the soundness, smoothness and rate follow from Proposition~\ref{prop:main-robust} below, which shows that if the claims are not satisfied, then there exists a strategy for the players in the game $G$ that contradicts the assumption that $G$ is $(p_0,4\eta/p_0,1-\xi)$-randomness generating (to see this, set the only new parameter $\delta$ in the proposition to $\delta = \xi/5$).

\begin{proposition}\label{prop:main-robust}
Let $1/2\geq \delta\geq 2p_c$, $\eta>0$ and $s\geq \eps>0$ be such that 
$$\frac{\log(16/(\eps^2 s))}{R} \,<\,\frac{\min(p_0\delta^2,\eta^2) p_c}{30},$$
and suppose further that $H_\infty^\eps(A,B\mid X,Y,T(A,B,X,Y)=1) < \delta R$ and $\Pr(T(A,B,X,Y)=1)\geq s$. Suppose that 

Then there exists a single-round pair of quantum devices and an $a_0\in\mathcal{A}$ such that when the game $G$ is played with the devices it holds that 
$$\Pr(G(A,B,X,Y)=1)\geq \omega_q(G)-4\eta/p_0\qquad\text{and}\qquad \Pr(A=a_0\mid X=x_0)\geq 1-5\delta,$$
where $p_0=\pi(x_0)$ is the marginal probability that input $x_0$ is chosen for Alice in the game $G$.  
\end{proposition}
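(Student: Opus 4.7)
The plan is to prove the contrapositive: starting from the low-entropy assumption on the multi-round devices, I will construct a single-round quantum strategy and an output symbol $a_0$ witnessing both claimed bounds. Combined with the randomness-generating hypothesis on $G$, this is what yields the main exponential expansion theorem (any strategy winning at the required rate must be sufficiently non-deterministic on $X=x_0$, contradicting the constructed strategy).

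\textbf{Step 1 (per-round bounds).} First, convert the bound $H_\infty^\eps(A,B\mid X,Y,T=1) < \delta R$ into a per-round statement. Passing to a nearby distribution achieving the smooth min-entropy (an $\eps$-error) and using $\Pr(T=1)\geq s$ to remove the conditioning on $T=1$ at a cost of $\log(1/s)$ bits (via Claim~\ref{claim:cond-min-entropy}), a chain-rule-style averaging argument shows that the \emph{average} per-round conditional output entropy is of order $\delta + \log(1/(\eps^2 s))/R$; the parameter condition of the proposition forces the second term to be negligible compared to $\delta$. Since $u$ is sparse and a $(1-p_c)$-fraction of rounds have input $(x_0,y_0)$, it then follows that for a typical round $i^*$ and typical classical history, the conditional distribution of $A_{i^*}$ given $X_{i^*}=x_0$ is concentrated on some $a_0$ with mass at least $1-5\delta$. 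A parallel averaging using the acceptance condition ``win rate in game rounds $\geq \omega_q(G)-\eta$'' shows that, for a typical round played as a game round (i.e. $u_{i^*}=1$, input $(x,y)\sim\pi$), the conditional winning probability is at least $\omega_q(G)-4\eta/p_0$; the factor $1/p_0$ absorbs the loss from having to simultaneously control the behavior on Alice's input $x_0$ (of marginal weight $p_0$) inside the averaging over inputs.

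\textbf{Step 2 (constructing a single-round strategy).} Having identified a typical round $i^*$ and typical history by a union bound, define the single-round devices as follows: using shared classical randomness (allowed under both the quantum and non-signaling models), Alice and Bob jointly sample a history of length $i^*-1$ from the ``typical'' distribution pinned down in Step 1, with $u_{i^*}=1$ forced and the other rounds' inputs drawn from $q_{\eps^2/4}$ and $\pi$ as prescribed by the protocol. Each device then feeds the simulated inputs into its own multi-round device up to round $i^*-1$, and on receiving the single-round game's actual inputs $(x,y)\sim\pi$ plays round $i^*$ on these inputs and outputs $(A_{i^*},B_{i^*})$. Because the simulated history is purely classical shared data, this construction is manifestly non-signaling (and quantum, if the underlying multi-round devices are). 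Its winning probability is exactly the conditional win rate at round $i^*$ under the chosen history, and its output distribution on input $x_0$ is the conditional marginal of $A_{i^*}$ under that history, so the two bounds from Step 1 transfer directly.

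\textbf{Main obstacle.} The core difficulty is the simultaneous existence of a single round $i^*$ and history pattern realizing both properties: near-determinism of Alice's output on input $x_0$ (a statement about the abundant non-game rounds, constrained by the global entropy budget of $\delta R$) and near-optimal winning probability when the round is played as a game round (a statement about the rare game rounds, constrained by the acceptance condition). The quantitative averaging required — involving the smoothing error $\eps$, the acceptance probability $s$ (possibly exponentially small in $m$), and the sparsity $p_c$ — is exactly what forces the intricate parameter inequality $\log(16/(\eps^2 s))/R < \min(p_0\delta^2,\eta^2)p_c/30$: each ``bad-round'' event must have probability small enough that a union bound over entropy-bad and winning-bad rounds still leaves a typical $i^*$. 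A secondary subtlety is verifying that conditioning on the selected history preserves non-signaling, which works because we only condition on classical data that the two devices can agree on in advance via shared randomness, while the remaining quantum freedom is the unaltered round-$i^*$ measurement of the original devices.
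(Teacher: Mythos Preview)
Your overall architecture matches the paper's: unwind the smooth min-entropy bound into a set of ``likely'' transcripts, use a chain rule to get per-round near-determinism on input $x_0$ (the paper's Claim~\ref{claim:robust-pf-1}), use a martingale bound on the win indicators to get per-round near-optimality in game rounds (Claim~\ref{claim:robust-pf-2}), and then locate a single round where both hold. The gap is in Step~2. You have the devices ``jointly sample a history'' with shared classical randomness and then ``feed the simulated inputs'' into the multi-round device. But the guarantees you extracted in Step~1 are conditional on the \emph{full} past $(A,B,X,Y)_{<i^*}=(a,b,x,y)_{<i^*}$, not just on inputs: writing down a target output string $(a,b)_{<i^*}$ on a shared random tape does nothing to put the quantum device into the post-measurement state it would occupy had it actually produced those outputs. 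If instead you intend ``history'' to mean inputs only and let the device generate its own outputs, then the most-likely answer $a_{\mathrm{hist}}$ on input $x_0$ depends on the realized output history, and a mixture of near-deterministic strategies with different modes need not itself be near-deterministic---so you cannot produce the single $a_0$ the proposition demands. Your final paragraph claims conditioning is harmless ``because we only condition on classical data that the two devices can agree on in advance''; this is exactly the step that fails.

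The paper's fix is to \emph{fix} one specific transcript $(x,y,a,b,u)\in S$ and one round $i$, run the multi-round device on the deterministic input sequence $(x,y)_{<i}$, and \emph{abort and restart} whenever the observed outputs differ from $(a,b)_{<i}$. This rejection sampling is a legitimate quantum (and non-signaling) operation and places the device in precisely the required conditional state; then $a_0:=a_i$ is a single fixed symbol. A second, smaller omission in your Step~1: the chain-rule entropy bound constrains \emph{all} rounds (almost all of which carry input $x_0$), whereas you need the near-determinism to hold at a \emph{game} round with $x_i=x_0$ so that it can coexist with the winning condition. The paper bridges this by observing that, conditioned on $X_i=x_0$, the event $u_i=1$ is still an independent coin, and applies Hoeffding to transfer the all-rounds average to the game-rounds-with-$x_0$ average; this is where the $p_0\delta^2 p_c$ term in the hypothesis originates, and your sketch does not surface this independence argument.
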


\begin{proof}
To prove the proposition we analyze a slightly different protocol, in which the referee's procedure is replaced by the following simpler one: for each round $i\in [R]$, set $u_i=1$ independently with probability $p_c$, and define $w:=\sum_i u_i$. Then proceed as prescribed in the description of protocol $P_m$ above to choose inputs to the devices. By Claim~\ref{claim:sample-bin}, the statistical distance between the distribution on inputs chosen by this simplified referee and the original one is at most $\eps^2/2$. Hence the distribution of outputs produced by the same devices under the one or the other referee's input distribution will also have statistical distance at most $\eps^2/2$; conditioning on the event that $T(A,B,X,Y)=1$, which has probability at least $\eps$, will at most increase this distance to $\eps/2$. It will thus suffice to prove the proposition for the simplified referee under the restricted assumption that $H_\infty^{\eps/2}(A,B\mid X,Y,T(A,B,X,Y)=1) < \delta R$ to deduce the proposition for the original referee. 

Let $\Omega = \{(x,y,a,b,u)\in (\{0,1\}^5)^R\}$ be the probability space associated with the experiment consisting of executing the protocol with the devices. Here $(x,y)$ are the strings of inputs chosen by the (simplified) referee, $(a,b)$ the outputs observed, and $u$ a string of bits that indicates the locations chosen for the game rounds (which correspond to $u_i=1$). For every $i\in [R]$ let $U_i\in\{0,1\}$ be the random variable that is $1$ if and only if $u_i=1$. Let $W=\sum_i U_i$. By definition, $T(A,B,X,Y)=1$ if and only if 
$$\frac{1}{W}\sum_{i:U_i=1} 1_{G(X_i,Y_i,A_i,B_i)=1} \geq \omega_q(G)-\eta.$$
Applying the Chernoff bound (Fact~\ref{fact:chernoff}), since each round is chosen as a game round independently with probability $p_c$, 
$$\Pr\Big(\big|W-p_cR\big|\,\geq\, \frac{p_cR}{3}\Big)\,\leq\, 2e^{-p_c R/27}\,\leq\, \eps^2/4,$$
where the second inequality follows from our choice of parameters. Furthermore, if $w'$ is the number of rounds such that $w_i=1$ and $x_i=x_0$, and $W'$ the associated random variable, then similarly
$$\Pr\Big(\big|W'-p_cp_0R\big|\,\geq\, \frac{p_cp_0R}{3}\Big)\,\leq\, 2e^{-p_cp_0 R/27}\,\leq\, \eps^2/4.$$
Define an event $\win$ as the event that $T(A,B,X,Y)=1$ and 
\begin{equation}\label{eq:main-robust-0}
\big|W-p_cR\big|\,\leq\, \frac{p_cR}{3},\qquad \big|W'-p_cp_0R\big|\,\leq\, \frac{p_cp_0R}{3}.
\end{equation}
Further conditioning on $\win$, the assumptions of the proposition together with Claim~\ref{claim:cond-min-entropy} and $\eps^2\leq \eps/2$ imply that $H_\infty^{\eps/2}(A,B\mid \win,X,Y) < \delta R$ and $\Pr(\win)\geq s/2$, where we used $\eps\leq s$. By definition, the first condition implies that for any distribution $q$ such that $\|q-p\|_1\leq \eps/2$ (where here $q,p$ are taken as distributions on the probability space $\Omega_W$ obtained from $\Omega$ by conditioning on $\win$), $H_\infty(A,B\mid \win,X,Y) <\delta R$, where here the min-entropy is taken with respect to the distribution $q$. In particular, it must be that the set $S$ of all $(x,y,a,b,u)\in \win$ such that $\Pr((A,B)=(a,b)\mid (X,Y)=(x,y)) > 2^{-\delta R}$ has probability at least 
\begin{equation}\label{eq:prs}
\Pr(S)\,=\,\Pr(S\mid \win)\Pr(\win)\,\geq\, (\eps/2) (s/2)\,=\,s\eps/4.
\end{equation}
The following two claims show properties of those sequences $(x,y,a,b,u)\in S$. 

\begin{claim}\label{claim:robust-pf-1} 
For all but a fraction at most $\eps$ of all $(x,y,a,b,u)\in S$ it holds that 
\begin{equation}\label{eq:det}
\frac{1}{w'} \sum_{i\in [R],\,u_i=1,\,x_i=x_0} \Pr\big(A_i = a_i \mid X_i = x_0,\,(A,B,X,Y)_{<i}=(a,b,x,y)_{<i}\big) \geq  1-4\delta.
\end{equation}
\end{claim}

\begin{proof}
Let $(x,y,a,b,u)\in S$. By definition, $\Pr((A,B)=(a,b)\mid (X,Y)=(x,y)) > 2^{-\delta R}$. Applying Bayes' rule and taking logarithms we get
$$ \sum_{i=1}^R -\log\Pr(A_i = a_i \mid (X,Y)_i = (x,y)_i,\, (A,B,X,Y)_{<i}=(a,b,x,y)_{<i}) < \delta R,$$
where we used that $A_i$ is independent of $(X,Y)_{>i}$. Using concavity of the logarithm, we get
$$ \frac{1}{R}\sum_{i=1}^R \Pr(A_i = a_i \mid (X,Y)_i = (x,y)_i,\, (A,B,X,Y)_{<i}=(a,b,x,y)_{<i}) > 2^{-\delta}\geq 1-\delta.$$
Note that $A_i$ is independent of $Y_i$. Moreover, since $(x,y,a,b,u)\in \win$ there are at most $4p_cR/3$ game rounds, hence at least $(1-4p_c/3)R$ rounds must have $x_i=x_0$. Therefore, 
$$\frac{1}{R}\sum_{i=1}^R \Pr(A_i = a_i \mid X_i=x_0,\, (A,B,X,Y)_{<i}=(a,b,x,y)_{<i}) \geq 1-\delta-4p_c/3 \geq 1-2\delta.$$
Finally, note that conditioned on $X_i=x_0$ (and $(A,B,X,Y)_{<i}=(a,b,x,y)_{<i}$) any given round $i$ is chosen as a game round independently with probability $p_cp_0/(1-p_c+p_cp_0)\geq p_cp_0/2$; the distribution of $A_i$, conditioned on $X_i=x_0$, does not depend on this choice. Applying Hoeffding's inequality (Fact~\ref{fact:hoeffding}), 
$$\Pr\Big(\frac{1}{W'}\sum_{i\in [R], U_i=1} \Pr(A_i = a_i \mid X_i=x_0,\, (A,B,X,Y)_{<i}=(a,b,x,y)_{<i}) \leq 1-4\delta\Big) \leq 2e^{-8\delta^2 W'/3} \leq s\eps^2/4,$$
where here the summation is restricted to those rounds in which $U_i=1$ and $X_i=x_0$, and for the second inequality we used the bound on $W'$ and our choice of parameters. Using the lower bound~\eqref{eq:prs} on the size of $S$, the claim is proved.
\end{proof}

\begin{claim}\label{claim:robust-pf-2}
For all but a fraction at most $\eps$ of $(x,y,a,b,u)\in S$ it holds that 
\begin{equation}\label{eq:win}
\frac{1}{w} \sum_{i\in [R],\,u_i=1} \Pr\big(G(X_{i},Y_{i},A_{i},B_{i})=1 \,\big|\, (A,B,X,Y)_{<{i}} = (a,b,x,y)_{<{i}}\big) \geq \omega_q(G)-2\eta.
\end{equation} 
\end{claim}

\begin{proof}
For any $j=1,\ldots,W$ define a random variable $Z_j\in\{0,1\}$ on $\Omega_W$ by $Z_j=1$ if and only if $G(x_{i_j},y_{i_j},a_{i_j},b_{i_j})=1$, where $i_j$ is the index of the $j$-th game round. By definition of $\win$, it holds that 
\begin{equation}\label{eq:main-robust-1}
\sum_j \,Z_j  \,\geq\, W(\omega_q(G)-\eta).
\end{equation}
For any $k=1,\ldots,W$ let $V_k = \sum_{j=1}^k (Z_j - \textsc{E}[Z_j \mid Z_{j-1},\ldots,Z_{1},U])$.
Then $(V_k)$ is a martingale with respect to the filtration induced by the sequence of random variables $(W,Z_1),(W,Z_1,Z_2),\ldots,(W,Z_1,\ldots,Z_W)$. Applying Azuma's inequality (see e.g. Theorem~5.2 in~\cite{Dubhashi98concentrationof}),
\begin{equation}\label{eq:main-robust-2}
\Pr\Big( \Big|\sum_j Z_j - \sum_j \textsc{E}\big[Z_j|Z_{j-1},\ldots,Z_1,W\big] \Big| \geq W\eta \Big)\,\leq\, 2\,e^{-W\eta^2/2}.
\end{equation}
Using~\eqref{eq:main-robust-0},~\eqref{eq:main-robust-2} together with~\eqref{eq:main-robust-1} implies that
\begin{equation}\label{eq:main-robust-3}
\Pr\Big( \sum_j \textsc{E}\big[Z_j\mid Z_{j-1},\ldots,Z_1,W\big] \,\leq\, W(\omega_q(G)-2\eta) \Big)\,\leq\, \eps^2\, s/4,
\end{equation}
given our choice of parameters. The probability here is taken over $\Omega_W$; removing the conditioning on $\win$ will give a probability over $\Omega$ that is at most $\eps\, s/8$. On $\Omega$, for any $j$ it holds by definition that $\textsc{E}[Z_j\mid Z_{j-1},\ldots,Z_1,W\big]\leq \omega_q(G)$. Using that $\Pr(S)\geq s\eps/4$, Eq.~\eqref{eq:main-robust-3} implies that all but a fraction at most $\eps$ of $(x,y,a,b,u)\in S$ are such that~\eqref{eq:win} holds. 
\end{proof}

Using Claims~\ref{claim:robust-pf-1} and~\ref{claim:robust-pf-2} we may now conclude the proof of the proposition. Fix any $(x,y,a,b,u)\in S$ such that both~\eqref{eq:det} and~\eqref{eq:win} hold. By an averaging argument a round $i$ such that $u_i=1$ and $x_i=x_0$ can be found such that both equations hold with the ``loss'' on the right-hand side multiplied by $W'/W\leq 2p_0$ for the case of~\eqref{eq:win} and any constant greater than $1$ for~\eqref{eq:det}. Fix such an $i$. 
 Execute the protocol with the devices up to the $i$-th round (excluded), choosing inputs as prescribed by $(x,y)$. If the outputs produced by the devices do not match $(a,b)$ in every round, abort and restart. Conditions~\eqref{eq:det} and~\eqref{eq:win} guarantee that, once the conditioning succeeds, the two devices at the beginning of round $i$ will be in a state such that both conditions stated in the conclusion of the proposition hold.
\end{proof}

\section{Upper bounds}\label{sec:upper}
In this section we prove upper bounds on the expansion attainable by a wide class of randomness amplifiers. The upper bounds are proved by exhibiting ``cheating strategies'' for the two devices $D_A$ 
and $D_B$ that fool a referee into accepting, while producing an amount of entropy that is at most doubly exponential in the referee's seed length. In particular, our bounds on output entropy are independent of the number of rounds.

The main idea behind the cheating strategies we exhibit is that, after a sufficiently large number of rounds, there are inevitable correlations between the referee's inputs to the devices that hold irrespective of the referee's choice of random seed. These correlations can be inferred from the given input distribution $\pi$ of the protocol, before it begins. In Theorems~\ref{thm:easydouble} and~\ref{thm:double-exp} we use the observation that after a number of rounds that is doubly exponential in the referee's seed length, the inputs to $D_A$ and $D_B$ in the current round $i$ must be identical to their inputs in some previous round $j < i$. If the referee's test is particularly simple (as it is assumed to be in Theorem~\ref{thm:easydouble}), then the devices can pass the protocol by simply copying their answers from round $j$. More generally, we show that for robust protocols there will be a set of rounds $J \subseteq [R]$ such that $|J| = 2^{O(2^m)}$ (where $m$ is the referee's seed length), and a strategy for the devices to deterministically recombine their respective answers from the rounds in $J$ into answers for the rounds in $[R]\backslash J$. It follows that the devices' output entropy is at most $O(|J|) = 2^{O(2^m)}$.

An important element of the cheating strategies we present is the \textbf{input matrix}, which is defined for any nonadaptive protocol as follows.

\begin{definition}[Input matrix]\label{def:input-matrix}
Let $P$ be an $R$-round, non-adaptive protocol with seed length $m$. The input matrix $M_P$ is the $R \times 2^m$ matrix whose $(i,\sigma)$-entry is $M_P(i,\sigma)=(X(\sigma)_i,Y(\sigma)_i)$, where here $X(\sigma)$ (resp. $Y(\sigma)$) are the input sequences for device $D_A$ (resp. $D_B$) chosen by the referee on seed $\sigma\in\{0,1\}^m$.
\end{definition}

When $P$ is clear from context we shall simply write $M$ instead of $M_P$ for the input matrix. We let $M_i \in (\X\times \Y)^{2^m}$ denote the $i$th row of an input matrix $M = M_P$. We define the set $F(M) \subseteq [R]$ as the set of round indices $i$ such that $i\in F(M)$ iff $M_i \neq M_j$ for all $j < i$. The following immediate claim places a bound on the size of $F(M)$. 

\begin{claim}\label{claim:fm}
Let $P$ be a protocol with seed length $m$ and input alphabets $\X,\Y$. Then $|F(M)| \leq |\X \times \Y|^{2^m}$.
\end{claim}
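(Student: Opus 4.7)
The plan is to observe that $|F(M)|$ is exactly the number of distinct rows appearing in $M$, and then bound that by the total number of possible rows.

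First I would unpack the definition: $i \in F(M)$ means the row $M_i$ does not equal any earlier row $M_j$ with $j < i$. Thus $F(M)$ picks out, for each distinct row value that actually occurs in $M$, the unique index at which that row value first appears. This gives a bijection between $F(M)$ and the set $\{M_i : i \in [R]\}$ of distinct rows, so $|F(M)| = |\{M_i : i \in [R]\}|$.

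Next I would bound the number of possible rows. Each row $M_i$ lies in $(\X \times \Y)^{2^m}$, since a row has one entry $M(i,\sigma) \in \X \times \Y$ for each seed $\sigma \in \{0,1\}^m$. Hence the total number of possible row values is $|\X \times \Y|^{2^m}$, so $|\{M_i : i \in [R]\}| \leq |\X \times \Y|^{2^m}$, and combining with the previous step yields the claim.

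There is essentially no obstacle here; this is a counting fact that follows directly from the definitions of $M$ and $F(M)$. The only thing worth being careful about is to state clearly that $F(M)$ is in bijection with the distinct rows (not merely injecting into them), so that the bound is tight in the right sense for how it is used later in the upper bound arguments.
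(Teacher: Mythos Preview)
Your proposal is correct and matches the paper's reasoning; the paper simply calls the claim ``immediate'' without writing out a proof, and your argument---identifying $F(M)$ with the set of distinct rows and bounding that by $|(\X\times\Y)^{2^m}|$---is exactly the intended observation.
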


\subsection{A simple doubly exponential bound}\label{sec:easy-ub}

We first demonstrate a doubly exponential upper bound on randomness amplifiers that are based on perfect games, which are games $G$ such that $\omega_q(G) = 1$ (or $\valns(G) = 1$, if we're allowing devices with full non-signaling power). In these protocols, the referee checks that the devices win every single round.

\medskip
\begin{theorem}  \label{thm:easydouble} 
Let $G$ be such that $\valq(G) = 1$ (resp. $\valns(G) = 1$). Let $P=(P_m)$ be a randomness amplifier with input (resp. output) alphabets $\X,\Y$ (resp. $\A,\B$) and in which the referee's test consists in verifying that the devices win $G$ in every round. Suppose completeness and soundness of $P$ both hold with quantum (resp. non-signaling) devices. Then the expansion of $P$ satisfies
$$g(m)\,\leq\, |\X \times \Y|^{2^m}  \log|\A\times \B| - \log(1 - \eps(m)),$$
where $\eps(m)$ is the smoothness of $P$. 
\end{theorem}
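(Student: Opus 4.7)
The plan is to exhibit an explicit cheating strategy that (i) is accepted by the referee with probability $1$, and (ii) produces outputs whose max-entropy is bounded by $|F(M)| \cdot \log|\A\times\B|$, and then apply Claim~\ref{claim:fm} together with Lemma~\ref{lem:basic_entropy} to convert this max-entropy bound into the desired smooth min-entropy bound.

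First, I would fix the protocol $P_m$ and examine its input matrix $M = M_{P_m}$. The adversarial devices are pre-programmed with full knowledge of $M$ (they know $P_m$, just not the referee's random seed $\sigma$). For each round index $i\in [R]$, define $j(i)$ to be the smallest index with $M_{j(i)} = M_i$; note $j(i)=i$ iff $i\in F(M)$, and by Claim~\ref{claim:fm} the set $F(M)$ has size at most $|\X\times\Y|^{2^m}$. The cheating strategy is: for every round $i\in F(M)$, the devices employ a fixed perfect quantum (resp.\ non-signaling) strategy $S_G$ for $G$ using fresh shared resources, and record their outputs $(a_i,b_i)$; for every round $i\notin F(M)$, each device deterministically outputs its own part of $(a_{j(i)},b_{j(i)})$. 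Because no device inspects the other's history, the strategy is quantum (resp.\ non-signaling) whenever $S_G$ is.

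The correctness check is straightforward: by definition of $M_i = M_{j(i)}$, for \emph{every} seed $\sigma$ the inputs $(X(\sigma)_i,Y(\sigma)_i)$ equal $(X(\sigma)_{j(i)},Y(\sigma)_{j(i)})$, so the replayed answers $(a_{j(i)},b_{j(i)})$ satisfy the game predicate in round $i$ whenever they did in round $j(i)$. Since $S_G$ wins $G$ with probability $1$ on every input, the devices win every round and the referee accepts with probability~$1$; in particular the completeness assumption on $P_m$ is not needed beyond ensuring $s(m) \leq 1$.

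For the entropy bound, observe that the full output sequence $(A,B)\in (\A\times\B)^R$ is a deterministic function of the sub-vector $((A_i,B_i))_{i\in F(M)}$, because the remaining rounds are verbatim copies. Hence the support of $(A,B)$ has size at most $|\A\times\B|^{|F(M)|}$, giving $H_0(A,B) \leq |F(M)| \cdot \log|\A\times\B| \leq |\X\times\Y|^{2^m}\log|\A\times\B|$ by Claim~\ref{claim:fm}. Since conditioning on the event $\{T=1\}$ (which holds with probability $1$ under this strategy) does not change the distribution, the soundness of $P$ applied to this strategy forces $g(m) \leq \SmoothMinEntropy{\eps}{A,B \mid T=1} = \SmoothMinEntropy{\eps}{A,B}$, and Lemma~\ref{lem:basic_entropy}(2) yields the claimed bound $g(m) \leq |\X\times\Y|^{2^m}\log|\A\times\B| - \log(1-\eps(m))$.

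The only subtle point, and the one I would think hardest about, is confirming that the ``replay'' phase does not violate non-signaling: since each device consults only its own past inputs/outputs and the public data $M$ and $j(\cdot)$, no information flows across the cut, so the strategy lies in the non-signaling (and, when $S_G$ is quantum, the quantum) class. Everything else is essentially a counting argument driven by Claim~\ref{claim:fm}.
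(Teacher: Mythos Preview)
Your proposal is correct and follows essentially the same approach as the paper's own sketch: construct a cheating strategy that plays the perfect strategy $S_G$ on rounds in $F(M)$ and deterministically replays the earlier answers on all other rounds, then bound the max-entropy via Claim~\ref{claim:fm} and convert to smooth min-entropy via Lemma~\ref{lem:basic_entropy}. Your write-up is in fact slightly more careful than the paper's sketch in making explicit the map $j(i)$ and in verifying that the replay step respects the non-signaling constraint.
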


We only sketch the proof here; we give a more general argument in the next section. The idea of the proof is as follows: in each round $i$, the devices check whether $i\in F(M)$ or not, where $M = M_{P_m}$ is the input matrix corresponding to protocol $P_m$. If it is, then the devices play according to the ideal, honest strategy that wins $G$ with probability $1$. If not, then there must exist a $j\in F(M)$, $j < i$, such that $M_i = M_j$. Thus, regardless of the referee's seed, it must be that  $(x_i,y_i)=(x_j,y_j)$ always. In that case, the devices will simply replay their outputs $(a_j,b_j)$ from that round, independently setting $a_i:=a_j$ and $b_i:=b_j$. Since we can assume that round $j$ was won with probability $1$, round $i$ must be won with probability $1$ as well. It is easy to see that the only entropy-generating rounds are those in $F(M)$, and the theorem follows from Claim~\ref{claim:fm} and Lemma~\ref{lem:basic_entropy}.

\subsection{A doubly exponential bound for robust protocols}

In this section we generalize the bound from the previous section to show a doubly exponential upper bound on the expansion achievable by any randomness amplifier based on a protocol that is non-adaptive and robust. In particular, the underlying game $G$ may not be perfectly winnable, and the referee's test $T$ may not necessarily check that the devices win $G$ in every single round. The fact that we allow an arbitrary test $T$ in the protocol complicates the proof, as the referee may now for example check for obvious answer repetitions in the players' answers to identical question pairs, and thereby easily detect cheating strategies of the form described in Section~\ref{sec:easy-ub}. Nevertheless, we will design a somewhat more elaborate cheating strategy for the devices in any such protocol, that prevents it from achieving unbounded expansion. 

\begin{theorem}\label{thm:double-exp}
Let $P=(P_m)$ be a natural, $\eta$-robust randomness amplifier such that completeness and soundness both hold with respect to quantum (resp. non-signaling) devices. Let $K_m = \Omega\left(\frac{1}{\eta^2} \log \frac{|\A \times \B|\cdot |F(M_{P_m})|}{\eta} \right)$. Then the expansion of $P$ satisfies 
$$g(m) \,\leq\, K_m \cdot  |F(M_{P_m})| \cdot \log|\A\times\B|-\log(1-\eps(m)),$$
where $\A,\B$ are the output alphabets of $P$, and $\eps(m)$ is the smoothness of $P$.
\end{theorem}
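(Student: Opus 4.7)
The plan is to exhibit a cheating strategy $S_{\text{cheat}}$ that passes the test of $P_m$ with probability at least $s(m)$ but whose outputs lie in a support of size at most $|\A\times\B|^{K_m\cdot|F|}$, where $F:=F(M_{P_m})$; the stated bound on $g(m)$ then follows from Lemma~\ref{lem:basic_entropy}(2). The construction exploits the ``pattern buckets'' of the input matrix: since $M=M_{P_m}$ depends only on the protocol, both devices can precompute the partition $\phi:[R_m]\to F$ defined by $\phi(i)=\min\{j\in F: M_j=M_i\}$, and any two rounds in the same bucket $\phi^{-1}(k)$ receive identical input pairs under every referee seed. Within each bucket, I declare the first $\min(K_m,|\phi^{-1}(k)|)$ rounds to be \emph{training} and the rest \emph{copying}. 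In training rounds, the devices play honestly using the ideal single-round strategy $S_G$ of the game that underlies $P$, each storing its output in classical memory. In a copying round $i$ of bucket $k$, the devices use pre-distributed shared classical randomness to draw a uniform index $j_i\in[K_m]$, and each replays its own stored output from the $j_i$-th training round of bucket $k$; this is implementable by both quantum and non-signaling devices, since each response depends only on that device's inputs, its local randomness/entanglement, its classical memory, and the shared coins.

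Because $S_{\text{ideal}}=S_G^{\otimes R_m}$ plays the rounds independently, the $K_m$ training outputs within each bucket $k$ are i.i.d.\ samples from $p_k:=p_{S_G}(\cdot\mid x_{i_k},y_{i_k})$. A standard Chernoff-plus-union-bound argument shows that the hypothesized scaling $K_m=\Omega(\eta^{-2}\log(|\A\times\B|\cdot|F|/\eta))$ guarantees that the empirical distributions $\hat{p}_k$ satisfy $\|\hat{p}_k-p_k\|_1\leq \eta$ for every bucket $k$ simultaneously, except on a ``bad'' event of probability at most $\eta$ over the training outputs. On the good event, the round-$i$ conditional distribution of $S_{\text{cheat}}$'s outputs, taken after averaging over the fresh coin $j_i$ and given the history, equals $\hat{p}_{\phi(i)}$ in copying rounds and $p_{\phi(i)}$ in training rounds; both are within $\eta$ of the ideal $p_{\phi(i)}$, so $S_{\text{cheat}}$ is $\eta$-close to $S_{\text{ideal}}$ in the sense required by the robustness definition, and $\eta$-robustness of $P_m$ forces $\Pr(T=1\mid S_{\text{cheat}},\text{good})\geq c(m)$. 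Marginalizing over the bad event yields $\Pr(T=1\mid S_{\text{cheat}})\geq c(m)-\eta \geq s(m)$ for $\eta$ chosen small enough relative to $c(m)-s(m)$.

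To finish, I derandomize the shared coins: by averaging, there exists a fixed realization $j^*$ of $(j_i)$ such that the deterministic-given-training-outputs strategy $S_{\text{cheat}}^{j^*}$ still satisfies $\Pr(T=1\mid S_{\text{cheat}}^{j^*})\geq s(m)$. Under $S_{\text{cheat}}^{j^*}$ the output pair $(A,B)$ is a deterministic function of the at most $K_m\cdot|F|$ training outputs, so its support (and, a fortiori, the support of $(A,B)$ conditioned on $T=1$) has size at most $|\A\times\B|^{K_m\cdot|F|}$, and Lemma~\ref{lem:basic_entropy}(2) applied to this conditional distribution gives exactly the claimed bound $\SmoothMinEntropy{\eps(m)}{A,B\mid T=1} \leq K_m\cdot|F|\cdot\log|\A\times\B| - \log(1-\eps(m))$. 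The main obstacle relative to Theorem~\ref{thm:easydouble} is that the referee's test is now an arbitrary function: a naive ``single-sample deterministic copy'' strategy (as in Section~\ref{sec:easy-ub}) produces conditional distributions in copying rounds that are $\Omega(1)$-far from the ideal and could be caught by a cleverly designed test. Recovering genuine $\eta$-closeness of conditional distributions is what forces the $\Omega(\eta^{-2}\log(\cdot))$ scaling of $K_m$ through empirical concentration, and the concomitant use of shared coins (only an intermediate device, removed by the probabilistic method) is what lets the entropy bound still scale only with the training outputs.
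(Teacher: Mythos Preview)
Your proposal follows essentially the same approach as the paper's proof: partition rounds into buckets by their input-matrix row, collect $K$ honest samples per bucket, replay via shared coins in all other rounds, invoke $\eta$-robustness on the good event where every empirical distribution is $\eta$-close to the ideal one (via Hoeffding plus a union bound over $|\A\times\B|\cdot|F|$ events), derandomize the shared coins by averaging, and bound $H_0$ by the number of samples using Lemma~\ref{lem:basic_entropy}. The one operational difference is that you generate the $K$ samples per bucket by using the first $K$ \emph{protocol rounds} of that bucket as ``training'' rounds, whereas the paper has the devices \emph{privately} play $K$ copies of $S_G$ the very first time a bucket is encountered, storing all $K$ outcomes internally and reporting only one (chosen by the shared coin) to the referee.

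This difference introduces a small but genuine wrinkle in your robustness step. To invoke $\eta$-robustness you effectively fix a good realization of the training outputs and claim the resulting sub-strategy is $\eta$-close to $S_{\text{ideal}}$. But once the training outputs are pinned down, the training rounds themselves become deterministic point masses---generically $\Omega(1)$-far from $p_{\phi(i)}$---so the per-round closeness condition fails there. (Equivalently: your ``good event'' is a function of some of the \emph{reported} outputs $(A_i,B_i)$, so conditioning on it perturbs the very distributions you are comparing.) The paper sidesteps this by keeping the $K$ samples private: fixing a good sample realization yields a bona fide strategy whose round-$i$ output distribution is exactly the empirical $q_{\phi(i)}$ in \emph{every} round, and on the good event this is $\eta$-close everywhere, so robustness applies cleanly to give $\Pr(T=1)\geq c$ before derandomizing. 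Your argument is easily repaired by adopting this device; everything else is identical to the paper.
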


Combined with Claim~\ref{claim:fm}, the theorem implies that any $\eta$-robust randomness amplifier $P$ must have a expansion $g(m) = 2^{O(2^m)}$ (where the constant in the $O(\cdot)$ depends only on $\eta$, the smoothness $\eps$, and the alphabets $\X, \Y, \A, \B$). This in particular demonstrates that unbounded randomness expansion as demonstrated in Lemma~\ref{lem:infinite} is impossible as soon as the devices are allowed to have (classical) memory.

The idea for the proof is simple. Instead of directly reusing outputs corresponding to identical pairs of inputs, as described in Section~\ref{sec:easy-ub}, the devices first repeatedly apply the protocol's ideal quantum (resp. non-signaling) strategy for game $G$ in order to locally generate a discrete approximation to the corresponding distribution on outputs. Whenever they receive a pair of questions for which they already computed such an approximation, they use shared randomness to jointly sample a pair of answers from the approximating distribution. To conclude we use the probabilistic method to derandomize the shared sampling step (which would otherwise still lead to the generation of a constant amount of entropy per round). 

\begin{proof}[Proof of Theorem~\ref{thm:double-exp}]
Fix an $m$ and protocol $P_m$, with $R=R_m$ rounds. Consider the following randomness-inefficient strategy $S'$ for the devices. Since $P_m$ is a natural protocol, it has has an ideal strategy of the form $S_G^{\otimes R}$, for $S_G$ is a single-round two-player quantum strategy for $P_m$'s underlying game $G$. Let $M=M_{P_m}$ be the input matrix for protocol $P_m$. At every round $i$, $D_A$ and $D_B$ locally check whether $i\in F(M)$. If so, they first perform the following \textbf{sampling step}: repeatedly apply the strategy $S_G$ a number $K= \Omega\left(\frac{1}{\eta^2} \ln \frac{|\A \times \B|\cdot |F(M)|}{\eta} \right)$ times on their respective inputs $x_i$ and $y_i$. Let the outcomes of the $K$ instances be $a^{(i)}=(a^{(i)}_k)_{k=1,\ldots,K}$ and $b^{(i)}=(b^{(i)}_k)_{k=1,\ldots,K}$. Each device stores its own sequence of outcomes. 
Whether or not $i\in F(M)$, the devices then perform the following \textbf{replay step}. They identify the unique $j\leq i$ such that $j\in F(M)$ and $M_j=M_i$. Using shared randomness they select a uniformly random $k\in [K]$, and output $a_k^{(j)}$ and $b_k^{(j)}$ respectively.

Define the following probability density function on $\A\times\B$: for all $i\in F(M)$, for all $(a,b)\in \A\times\B$, 
\begin{equation}\label{eq:double-exp-1}
q_i(a,b) = \frac{1}{K} \sum_{k=1}^K 1_{\big(a^{(i)}_k,b^{(i)}_k\big) = (a,b)}.
\end{equation}
Assume first that the strategy $S'$ achieves winning probability $\Pr (T(X,Y,A,B) = 1) \geq c$. Let $V$ denote the devices' shared classical randomness, as it is used in the replay steps. By averaging, there exists a fixed setting $V^*$ such that the probability that $T(X,Y,A,B) = 1$, when using $V^*$, is at least $c$. Let $S$ be the strategy where Alice and Bob perform the sampling steps as usual, but in the replay steps, they use the fixed string $V^*$ instead (which they can precompute beforehand). Thus, the entropy of the outputs produced by the strategy $S$ comes entirely from the sampling steps. There are at most $|F(M)|$ sampling steps, and in each step, at most $K \log|\A\times\B|$ bits of randomness are produced, so $H_0(A,B \mid T(A,B,X,Y) = 1) \leq |F(M)| \cdot K \cdot \log|\A\times\B|$. We use Lemma~\ref{lem:basic_entropy}, and the theorem follows provided we can show that $S'$ achieves the desired success probability whenever $K$ is chosen as stated. 

To show this, we use the assumption that $P_m$ is an $\eta$-robust protocol. From the definition of $\eta$-robust and the strategy $S'$, it will suffice to verify that with high probability for every $i\in F(M)$ the distribution with density $q_i$, as defined in ~\eqref{eq:double-exp-1}, is $\eta$-close in statistical distance to the distribution implied by $S_G$, for the pair of inputs $(x_i,y_i)$. This follows from a standard application of Hoeffding's inequality: for any fixed $i$, $\eta>0$ and $(x_i,y_i)$ the probability that $\|q_i - S_G(\cdot,\cdot|x_i,y_i)\|_1 > \eta/2$ is at most $|\A \times \B| \cdot \exp(-O(\eta^2 K))$. By the union bound, the probability that there exists an $i\in F(M)$ such that $\|q_i - S_G(\cdot,\cdot|x_i,y_i)\|_1 > \eta$ is at most $|\A \times \B|\cdot |F(M)|\cdot \exp(-O(\eta^2 K))$. By our setting of $K$, this probability can be  made less than $\eta/2$.
\end{proof}

\subsection{An exponential upper bound for protocols with non-signaling devices}

In this section we prove exponential upper bounds on the attainable expansion of a class of non-adaptive randomness amplifiers for which completeness holds with respect to non-signaling devices.  We address protocols using the $\CHSH$ game, which have been widely studied in the literature~\cite{Pironio,vazirani2012certifiable}. 

\begin{theorem} \label{CHSHAND}
Let $P = (P_m)$ be a randomness amplifier in which completeness and soundness both hold with non-signaling devices, and for each $m$ the referee's test $T_m$ is a product test with respect to the $\CHSH$ game.  Then
$$g(m) \leq  2^{2m+2} - \log(1 - \eps(m)),$$
where $\eps(m)$ is the smoothness parameter of $P$.
\end{theorem}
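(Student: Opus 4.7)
The plan is to construct an explicit non-signaling ``cheating'' strategy for the devices in $P_m$ that passes the protocol with probability $1$ and whose outputs have min-entropy at most $2^{2m+2}$. Since $\omega_{ns}(\CHSH)=1$ and completeness of $P_m$ holds with non-signaling devices, we may take the ideal strategy to be one that wins every $\CHSH$ round, so the induced win/loss pattern is identically $1^R$. Because $T_m$ is a product test with respect to $\CHSH$, it factors as $T_m = g$ applied to the win/loss pattern, and completeness forces $g(1^R) = 1$. Hence any non-signaling strategy that wins every single round is accepted, regardless of the specific form of $g$. The problem reduces to producing a non-signaling always-winning strategy with small output min-entropy.

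The second and central step is to exhibit a non-signaling always-winning device strategy that uses at most $2^{2m+2}$ bits of ``non-classicality.'' Although the input matrix $M_{P_m}$ has $R$ rows, it has only $2^m$ columns, indexed by the seed $\sigma$. On Alice's side her observed input sequence groups the $2^m$ seeds into at most $2^m$ equivalence classes (her ``view''), and similarly for Bob, giving at most $4^m$ pairs of views. I would equip the devices with a collection of at most $4\cdot 4^m = 2^{2m+2}$ perfect PR boxes, one for each (view-pair, CHSH input-type $(x_0,y_0)\in\{0,1\}^2$) combination, together with some classical shared randomness. In each round, each device identifies its current view from the inputs it has seen so far, together with the current input symbol, and uses this index to select and query a specific PR box; the PR-box outputs are then combined with the observed input via a deterministic rule to produce a winning response. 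Because any two rounds in which Alice's view, Bob's view, and the current input types coincide can safely reuse the same PR box, the bookkeeping is independent of $R$.

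Third, derandomize the shared classical randomness via the standard averaging argument used in the proof of Theorem~\ref{thm:double-exp}: some fixed setting continues to be accepted by the referee. Since each PR box contributes at most one bit of min-entropy to $(A,B)$ and there are at most $2^{2m+2}$ of them, $H_0(A,B \mid T_m(X,Y,A,B)=1) \leq 2^{2m+2}$, and Lemma~\ref{lem:basic_entropy} then gives $g(m) \leq 2^{2m+2} - \log(1-\eps(m))$.

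The main obstacle is the construction in the second step. Unlike the proofs of Theorems~\ref{thm:easydouble} and~\ref{thm:double-exp}, where the devices essentially re-use past outputs, here one must genuinely harness non-classical resources, since the $\CHSH$ winning condition admits no classical always-winning strategy and each PR box is single-use. The delicate combinatorial heart is verifying that indexing $4^{m+1}$ PR boxes by (view-pair, input-type) really does suffice to produce a coherent, non-signaling winning response in every round for every seed, and that each device's rule for selecting and combining PR-box outputs depends only on its own observed inputs so that the overall distribution remains non-signaling. This exploits the $\CHSH$-specific structure $a\oplus b=x\wedge y$, which is why the $2^{2m+2}$ bound is specific to $\CHSH$-based protocols rather than to randomness amplifiers in general.
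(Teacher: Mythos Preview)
Your reduction in the first paragraph is fine (and in fact slightly slicker than the paper's, which fixes an arbitrary $v$ with $g(v)=1$ rather than arguing $g(1^R)=1$). The derandomization in the third step is also standard. The gap is entirely in your second step, and it is a real one.

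You propose to index PR boxes by \emph{(Alice's view, Bob's view, input-type)} and have each device ``select and query a specific PR box'' based on its own view and current input. But Alice does not know Bob's view, and Bob does not know Alice's. So neither device can compute the full index $(v_A,v_B,x_0,y_0)$ locally; whatever selection rule Alice uses can depend only on $(v_A,x_i)$, and Bob's only on $(v_B,y_i)$. You do not say how these two local rules are made to point at the \emph{same} PR box in every round for every seed, nor why such coordination is even possible with only $2^{2m+2}$ boxes. You flag exactly this as ``the delicate combinatorial heart,'' but then do not supply it. (Also, the factor $4$ for input-type is spurious: a PR box already accepts an arbitrary pair $(x,y)\in\{0,1\}^2$; you do not need a separate box per input pair.)

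The paper resolves this with a completely different idea that you are missing: it treats the rows $M_i$ of the input matrix as vectors in $\mathbb{F}_2^{2^{m+1}}$ and lets $I$ be the set of rounds giving linearly independent rows, so $|I|\le 2^{m+1}$. In a round $i\notin I$ there is a fixed subset $J\subseteq I$ (depending only on $M$, not on the seed) with $(X_i,Y_i)=\bigl(\sum_{j\in J}X_j,\sum_{j\in J}Y_j\bigr)$, hence
\[
X_iY_i \;=\; \sum_{(k,j)\in J^2} X_kY_j.
\]
Now the devices pre-play one PR box $C_{kj}$ for each pair $(k,j)\in I^2$, with Alice feeding in $X_k$ (which she has seen) and Bob feeding in $Y_j$ (which he has seen), obtaining $A_{kj}\oplus B_{kj}=X_kY_j$. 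In round $i$ Alice outputs $\sum_{(k,j)\in J^2}A_{kj}$ and Bob outputs $\sum_{(k,j)\in J^2}B_{kj}$; each sum is locally computable and their XOR is exactly $X_iY_i$. The entropy sits in the $|I|^2\le 2^{2m+2}$ box outputs. The point is that the bilinearity of the $\CHSH$ predicate over $\mathbb{F}_2$ lets you decompose the needed correlation into a sum of \emph{cross} terms $X_kY_j$, each of which is accessible to a pair (Alice knows $X_k$, Bob knows $Y_j$) without either side needing to know the other's view. Your view-pair scheme has no analogue of this decomposition, and without it the local-selection problem looks unsolvable.
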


Theorem~\ref{CHSHAND} exhibits a scenario in which the specific structure of the underlying game $G$ and the protocol can be used to give an exponential improvement over Theorem~\ref{thm:easydouble}. For simplicity we have constrained the theorem statement to protocols involving the $\CHSH$ game, but the proof can be extended to establish the same result when $G$ is a balanced 2-player XOR game, as well as the (3-player) GHZ game, which has played an important role in early randomness expansion results~\cite{Colbeck11}. We refer to Appendix~\ref{app:upper} for additional details.  

We remark that Theorem~\ref{CHSHAND} implies a ``meta-theorem'' that says that the type of analysis performed in~\cite{vazirani2012certifiable} cannot be improved to have more than exponential expansion. Any randomness amplifier based on the $\CHSH$ game in which the referee only checks that the devices won more than a certain fraction of the rounds, and where the analysis of soundness only uses the fact that the devices are non-signaling, by Theorem~\ref{CHSHAND}, must be limited to exponential expansion. The randomness amplifier in~\cite{vazirani2012certifiable} is of this form, and hence modifying it to obtain super-exponential expansion would require either a non-product test, \emph{or} an analysis that uses the fact that the devices can ``only'' be quantum! 

\begin{proof}[Proof of Theorem~\ref{CHSHAND}]
Fix an integer $m$ and protocol $P=P_m$, with test $T$ and number of rounds $R$. 
Let $G(x,y,a,b) = 1 \oplus x y \oplus a \oplus b$ (i.e. the $\CHSH$ game predicate). For simplicity we first prove the theorem in the special case when the product test is
$$T(x,y,a,b) =  \prod_{i=1}^R G(x_i,y_i,a_i,b_i).$$ 
We give a strategy that can be used by the non-signaling devices $D_A$ and $D_B$ to ensure that $T(X,Y,A,B) = 1$ with probability 1. The strategy will have the additional property that all of the output pairs $(a_i, b_i)$, except for at most $2^m$ values of $i$, are deterministic functions of the outputs produced (using the ``honest'' strategy described in the proof of Lemma~\ref{nosignalCHSH}) in a particular set of $2^{2m}$ previous rounds.  This proves the desired result.

Let $M$ be the protocol's input matrix, as introduced in Definition~\ref{def:input-matrix}. Let us consider the rows of the input matrix $M$ as vectors $M_i \in \mathbb{F}_2^{2^{m+1}}$.  
Additionally, before the protocol begins, the devices precompute the set $I \subseteq [R]$, which consists of all $i$ such that $M_i$ (as a vector in $\mathbb{F}_2^{2^{m+1}}$) is linearly independent from $\{ M_j : j < i \}$. Note that $|I| \leq 2^{m+1}$.
 
We now describe the strategy employed by $D_A$ and $D_B$. In each round $i$, $D_A$ and $D_B$ check whether $i\in I$. If so, they perform the \textbf{sampling step}. Otherwise, they perform the \textbf{replay step}. Let $X_i$ and $Y_i$ denote the inputs to $D_A$ and $D_B$ in the $i$th round.

\textbf{Sampling step.} Let $i$ be a round in which $D_A$ and $D_B$ perform the sampling step. Let $I(i) = \{j \in I : j \leq i\}$. $D_A$ and $D_B$ play two series of private $\CHSH$ games, $S_1 = (C_{ij})$ and $S_2 = (C_{ji})$ for all $j\in I(i)$, and store the outcomes without reporting them to the referee. Using a canonical ordering of these games (e.g. playing series $S_1$ first, where the $C_{ij}$ are played in order of increasing $j$, and then $S_2$, where $C_{ji}$ are played in order of increasing $j$), the devices $D_A$ and $D_B$ use the perfect non-signaling strategy described in Lemma~\ref{nosignalCHSH} to play $C_{ij}$, and obtain outputs $A_{ij}$ and $B_{ij}$, respectively. Similarly, they will play the games $C_{ji}$ and obtain outputs $A_{ji}$ and $B_{ji}$, respectively.
Since we are using the perfect non-signaling strategy, for all $j\in I(i)$, we have $G(X_i,Y_j, A_{ij},B_{ij}) = G(X_j,Y_i,A_{ji},B_{ji}) = 1$. Note that the devices can play this series of private games without communicating.

Finally, $D_A$ and $D_B$ report outputs $A_i = A_{ii}$ and $B_i = B_{ii}$ to the referee.

\textbf{Replay step.} If $D_A$ and $D_B$ perform the replay step in round $i$, we have that $M_i$ is linearly dependent on the rows $\{ M_j : j < i \}$. Observe that the set $\{ M_j : j \in I(i) \}$ forms a linearly independent basis over $\mathbb{F}_2^{2^{m+1}}$ for the rows $\{ M_j : j \leq i \}$.  Thus, there exists a subset $J \subset I(i)$ such that $M_i  = \sum_{j \in J} M_j$, and it follows that, regardless of the value of random seed chosen by the referee, $(X_i, Y_i) = \sum_{j \in J} (X_j, Y_j) = (\sum_{j \in J} X_j, \sum_{j \in J} Y_j)$.  Knowing this, $D_A$ and $D_B$ now wish to produce output values $A_i$ and $B_i$ respectively (without communicating), such that $G(X_i,Y_i, A_i,B_i) =  1 \oplus X_i Y_i \oplus A_i \oplus B_i = 1$, which is equivalent to
$$A_i \oplus B_i =  X_i Y_i = \left (\sum_{j \in J} X_j \right )\left( \sum_{j \in J} Y_j \right) = \sum_{(k,j) \in J^2} X_k Y_j.$$
To accomplish this, $D_A$ outputs $A_i = \sum_{(k,j) \in J^2} A_{kj}$ and $D_B$ outputs $B_i = \sum_{(k,j) \in J^2} B_{kj}$, where the values of the summands are the outputs generated in the sampling steps described above. By design, for each $(k,j) \in J^2 \subset I(i)^2$, we have $A_{kj} \oplus B_{kj} = X_k  Y_j$.
It follows that
\begin{align*}
& A_i \oplus B_i = \sum_{(k,j) \in J^2} A_{kj} \oplus \sum_{(k,j) \in J^2} B_{kj} = \sum_{(k,j) \in J^2} A_{kj} \oplus B_{kj} \\ 
&  =  \sum_{(k,j) \in J^2} X_k Y_j = \left (\sum_{j \in J} X_j \right) \left(  \sum_{j \in J} Y_j \right) =  X_i Y_i
\end{align*}
which implies that $G(X_i,Y_i, A_i,B_i) =  1$, as desired. Thus, for every round $i \in [R]$, we have that $G(X_i,Y_i,A_i,B_i) = 1$ with probability $1$, and hence $T(X,Y,A,B) = 1$ with probability $1$. 

We now show the upper bound on the entropy of the devices' outputs. In every round, the outputs in all steps are a deterministic function of the round number and the set of outputs $\{ A_{ij}, B_{ij} : (i,j) \in I^2 \}$. Since this set contains exactly $|I|^2$ random variables, each of which has max-entropy $1$, the entire set can have max-entropy at most $|I|^2$. Thus $H_{\text{max}}(A, B) \leq |I|^2$. From our previous bound on $|I|$,  we have $H_{\text{max}}(A,B) \leq |I|^2 \leq 2^{2m+2}$. The upper bound on the smooth min-entropy follows from Lemma~\ref{lem:basic_entropy}.

This concludes the proof in the case that $T(X,Y,A,B) = \prod_{i=1}^R G(X_i,Y_i,A_i,B_i)$. We now indicate how the proof can be extended to general product tests $T$.  

As we saw above, $D_A$ and $D_B$ have a non-signaling strategy that allows them to pass each individual $\CHSH$ test with probability $1$, and produce at most $2^{2m+2}$ bits of entropy in their outputs.  We now want a similar proof which allows $D_A$ and $D_B$ to win against any $\CHSH$ product test, where an arbitrary function $g$ is used to combine the outcomes of the tests performed in each round.   Suppose that the test is specified by   
$$T(X,Y,A,B) = g \left ( G(X_1,Y_1,A_1,B_1),\ldots, G(X_R,Y_R,A_R,B_R)  \right)$$
for some function $g: \{0,1 \}^R \to  \{0,1\} $.  Since $c>0$ we know that the referee cannot reject every vector of wins and losses, so there must exist some $v \in \{0,1 \}^R$ such that $g(v) =1$. We can think of $v$ as specifying a sequence of $\CHSH$ wins and losses. $D_B$ can fix such a $v$ before the start of the protocol. $D_A$ and $D_B$ will perform exactly the same strategy as above, except where $D_B$ would have output $B_i$ in the $i$th round, $D_B$ will now output $B_i \oplus v_i \oplus 1$. It is easy to see that $G(X_i,Y_i,A_i,B_i \oplus v_i \oplus 1) = v_i$. Thus $T(X,Y,A,B \oplus v \oplus 1) = g(v) = 1$, and $D_A$ and $D_B$ will pass the referee's test with probability $1$. We again have $H_{\text{max}}(A, B) \leq 2^{2m+2}$, and the desired result follows. 
\end{proof}

We note that the cheating strategy exhibited in the proof of Theorem~\ref{CHSHAND} crucially relies on the existence of \emph{noiseless} devices. As such, the theorem suggests an intriguing possibility: that the assumption of an unavoidable presence of noise in any devices used to execute a given protocol may allow for the certification of \emph{additional} randomness, by ruling out special finely-tuned adversarial strategies.

\appendix
\section{A non-signaling strategy for CHSH}

We note that there is a no-signaling strategy that succeeds in the $\CHSH$ game with probability $1$. An analogue of Claim~\ref{claim:chsh-rg} also holds against no-signaling strategies (see~\cite{Pironio} Appendix A.3).

\medskip
\begin{lemma}  \label{nosignalCHSH}
There exists a non-signaling strategy that wins the $\CHSH$ game with probability 1. 
\end{lemma}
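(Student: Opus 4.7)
The plan is to exhibit the Popescu--Rohrlich (PR) box as an explicit non-signaling strategy. Specifically, I would define the joint distribution
\[
p_S(a,b \mid x,y) \,=\, \begin{cases} 1/2 & \text{if } a \oplus b = x \wedge y, \\ 0 & \text{otherwise,} \end{cases}
\]
for all $x,y,a,b \in \{0,1\}$, and show it satisfies the two requirements: (i) it wins the $\CHSH$ game with probability $1$, and (ii) it is non-signaling in the sense of the definition given in the preliminaries.

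Property (i) is immediate from the definition: for every input pair $(x,y)$, the distribution $p_S(\cdot,\cdot \mid x,y)$ is supported entirely on outputs $(a,b)$ satisfying the winning condition $a \oplus b = x \wedge y$.

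For property (ii), I would compute the one-party marginals directly. Fix any $x, y, a$. Since for each fixed $a$ there is exactly one value of $b$ (namely $b = a \oplus (x \wedge y)$) for which $p_S(a,b \mid x,y)$ is nonzero, we get $\sum_b p_S(a,b \mid x,y) = 1/2$, independent of $y$. Thus $p_S(a \mid x,y) = p_S(a \mid x,y')$ for all $y, y'$. A symmetric argument shows $\sum_a p_S(a,b \mid x,y) = 1/2$ independent of $x$. This establishes both non-signaling conditions and completes the proof. There is no real obstacle here: the construction is folklore (the PR box), and both verifications reduce to counting how many output pairs satisfy $a \oplus b = x \wedge y$ for each fixed input, which is always exactly two out of four.
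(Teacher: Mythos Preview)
Your proposal is correct and takes essentially the same approach as the paper: both exhibit the Popescu--Rohrlich box (the paper describes it by cases on $x\wedge y$, you describe it by the formula $p_S(a,b\mid x,y)=\tfrac{1}{2}\cdot \mathbf{1}[a\oplus b = x\wedge y]$, but these are the same distribution) and verify that each party's marginal is uniformly $1/2$ regardless of the other's input.
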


The proof of Lemma \ref{nosignalCHSH} is well-known, but may be instructive for readers unfamiliar with non-signaling strategies.

\begin{proof}
Labeling the inputs to the game as $x$ and $y$ respectively, imagine that the outputs ($a$ and $b$, resp.) are selected according to the following conditional distribution.

If $x \wedge y = 1$ then the two possible outputs pairs are $(a,b) = (1,0)$, and $(a,b) = (0,1)$ each with occurring probability $\frac{1}{2}$.  If $x \wedge y = 0$ then the output pairs are $(a,b) = (0,0)$, and $(a,b) = (1,1)$, again each occurring with probability $\frac{1}{2}$.  It now follows easily that, regardless of the values of $a$, $b$, $x$, and $y$ we have  
$$\sum_{b'} p(a,b' \mid x,y) = p(a \mid x,y) = p(a \mid x) = \frac{1}{2}$$
and
$$\sum_{a'}  p(a',b \mid x,y) = p(b \mid x,y) = p(b \mid y) = \frac{1}{2}$$
Thus, the above strategy is non-signaling by definition, and wins with probability 1.

\end{proof}

\section{Some randomness generating games}\label{app:rg}
\begin{claim}\label{claim:chsh-rg}
For any $\eta\geq 0$ the	game $\CHSH$ is $(1/2,\eta,f(\eta))$-randomness generating (against quantum strategies) for $f(\eta) = \frac{1}{2} + \sqrt{3\eta}$.
\end{claim}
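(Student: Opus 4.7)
The plan is to prove the sharper quantitative statement that any quantum strategy achieving $\CHSH$ success probability $\omega$ has its output bias $\delta := \max_a p(A = a \mid X = 0) - 1/2$ bounded by $\delta^2 \leq 3(\omega_q(\CHSH) - \omega)$. Combined with the hypothesis $\omega \geq \omega_q(\CHSH) - \eta$ this yields $\delta \leq \sqrt{3\eta}$, matching the claim (the condition $\pi(x_0) \geq 1/2$ is trivial for $x_0 = 0$ under the uniform input distribution of $\CHSH$). Since $\delta \leq 1/2$ always, the conclusion is automatic once $\sqrt{3\eta} \geq 1/2$, so I may assume $\eta < 1/12$ throughout.

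Passing to the $\pm 1$ output encoding and relabeling so that the majority output on input $x_0 = 0$ corresponds to $+1$, one has $\langle A_0\rangle = 2\delta$ and the winning probability rewrites as $\omega = \tfrac{1}{2} + \tfrac{1}{8}\langle \hat B\rangle$ with $\hat B := A_0 B_0 + A_0 B_1 + A_1 B_0 - A_1 B_1$. The central input I will invoke is the \emph{tilted} Tsirelson bound of Acin--Massar--Pironio: for every $\alpha \in \mathbb{R}$ and every quantum strategy with $\pm 1$-valued observables on a bipartite state,
\[
\alpha\langle A_0\rangle + \langle \hat B\rangle \;\leq\; \sqrt{8 + 2\alpha^2}.
\]

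Granting this bound, the remainder is a one-variable optimization. Substituting $\langle A_0\rangle = 2\delta$ and minimizing $\sqrt{8 + 2\alpha^2} - 2\alpha\delta$ over $\alpha \geq 0$, the optimum $\alpha^\star = 2\sqrt{2}\,\delta/\sqrt{1 - 2\delta^2}$ (which lies in $[0,2]$ for $\delta < 1/2$) produces $\langle \hat B\rangle \leq 2\sqrt{2}\sqrt{1 - 2\delta^2}$, and hence
\[
\omega_q(\CHSH) - \omega \;\geq\; \frac{1 - \sqrt{1 - 2\delta^2}}{2\sqrt{2}} \;=\; \frac{\delta^2}{\sqrt{2}\,\bigl(1 + \sqrt{1 - 2\delta^2}\bigr)} \;\geq\; \frac{\delta^2}{2\sqrt{2}} \;>\; \frac{\delta^2}{3},
\]
using the identity $1 - \sqrt{1-x} = x/(1 + \sqrt{1-x})$ with $x = 2\delta^2$ and the numerical inequality $2\sqrt{2} < 3$ at the end. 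Rearranging gives the required $\delta \leq \sqrt{3(\omega_q(\CHSH) - \omega)} \leq \sqrt{3\eta}$.

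The main obstacle is justifying the tilted Tsirelson bound, which is less standard than the usual $\langle \hat B\rangle \leq 2\sqrt{2}$. I expect to prove it either by exhibiting a sum-of-squares decomposition $\sqrt{8 + 2\alpha^2}\,I - \alpha A_0 - \hat B = \sum_i \lambda_i R_i^2$ with $\lambda_i \geq 0$ and $R_i$ Hermitian polynomials in $A_0, A_1, B_0, B_1$, or, more elementarily, by invoking Jordan's lemma to jointly block-diagonalize $(A_0, A_1)$ and $(B_0, B_1)$ into at-most-$2 \times 2$ blocks and then maximizing the resulting trigonometric expression on a two-qubit state directly (parametrizing each observable as a Bloch reflection and Schmidt-decomposing the shared state). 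Once the tilted bound is in hand, the remainder of the proof is the elementary calculus step above.
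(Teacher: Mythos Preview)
Your proposal is correct and arrives at exactly the same quantitative bound the paper uses, namely $\langle \hat B\rangle \leq 2\sqrt{2}\sqrt{1-2\delta^2}$, which is equivalent to the inequality $\Pr(A=a\mid X=x)\leq \tfrac{1}{2}\bigl(1+\sqrt{2-I^2/4}\bigr)$ with $I=\langle\hat B\rangle$. The paper simply \emph{cites} this inequality from~\cite{Pironio} as a black box and then performs the same final algebraic simplification you do (writing $I=2\sqrt{2}-8\eta$ and bounding $\tfrac{1}{2}\sqrt{2-I^2/4}$ by $\sqrt{3\eta}$). You instead \emph{derive} the inequality by optimizing the tilted Tsirelson bound $\alpha\langle A_0\rangle+\langle\hat B\rangle\leq\sqrt{8+2\alpha^2}$ over $\alpha$, and further propose to justify that bound via an SOS decomposition or Jordan's lemma. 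So the two arguments are the same at the level of the key inequality and the closing arithmetic; the difference is that your write-up is self-contained, at the cost of having to establish the tilted bound, whereas the paper offloads that step to the literature.
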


\begin{proof}
	Consider a quantum strategy for the $\CHSH$ game whose  success probability is at least $\Pr(\win) \geq \valq(\CHSH) - \eta$, where $\valq(\CHSH) = \cos^2(\pi/8)$. 
	It is proved in~\cite{Pironio} that for every $a$ and $x$ in $\{0,1\}$,
$$
	\Pr(A = a \mid X = x) \leq \frac{1}{2} \left(1 + \sqrt{2 - I^2/4} \right),
$$
where $I = 8\Pr(\win) - 4 = 8(\valq(\CHSH) - \eta) - 4$ is the so-called ``Bell correlation value''. Observe that $\valq(\CHSH) = (2 + \sqrt{2})/4$ for $\CHSH$, so
\begin{align*}
	\frac{1}{2}\left(1 + \sqrt{2 - I^2/4} \right) &\leq \frac{1}{2} + \sqrt{2} \cdot \sqrt{\sqrt{2} \eta - 2\eta^2} \leq \frac{1}{2} + \sqrt{3\eta}.
\end{align*}
\end{proof}

\textbf{The Magic Square game}. Consider a $3\times 3$ matrix, and suppose that one is asked to fill in each entry with $1$ or $0$, with the constraint that each row must have even parity and each column must have odd parity. Clearly, there is no such assignment that satisfies all the constraints, because while the row constraints imply that the sum of the entries has even parity, the column constraints imply that the same sum has odd parity, a contradiction.

Now consider the following $2$ player game, which we call the $\magic$ game. The referee chooses an $x\in [6]$ uniformly at random, interpreted as choosing a row or column of a $3\times 3$ matrix at random. Then, the referee chooses a $y\in [3]\times[3]$ that corresponds to a random entry in the row/column $x$. For example, conditioned on $x = 1$, $y$ is uniform over the set $\{(1,1), (1,2), (1,3) \}$, the entries in the first row. 
The referee sends $x$ to Alice, and solicits Alice for an assignment $a\in\{0,1\}^3$ to the entries in that row/column. Simultaneously, the referee sends $y$ to Bob and solicits Bob for an assignment $b\in\{0,1\}$ to entry $y$.  The referee checks that Alice's answer satisfies the parity constraint, and Alice's answer is consistent with Bob's.

From the foregoing discussion, it is easy to see that there is no classical strategy for Alice and Bob to successfully pass the referee's test with probability $1$; in fact it is not hard to show that $\omega_c(\magic)=17/18$. However, there \emph{is} a quantum strategy for Alice and Bob to win with probability $1$~\cite{Arvind:02}: $\omega_q(\magic)=\omega_{ns}(\magic)=1$.

To show that $\magic$ is randomness generating, we derive a contradiction by transforming any near-deterministic strategy for the players into a strategy for the guessing game, which is defined as follows: Alice and Bob receive inputs $x$ and $y$ from the Magic Square input distribution, respectively, and they win the guessing game if Alice outputs $y$. Clearly, there is no non-signaling strategy for Alice that allows her to guess Bob's output with probability greater than $1/3$.

\medskip
\begin{claim}\label{claim:ms-rg}
	Let $\eta < 1/13$. The game $\magic$ is $(1/9,\eta,f(\eta))$-randomness generating (against both quantum and no-signaling strategies) for $f(\eta) = 12/13 + \eta$.
\end{claim}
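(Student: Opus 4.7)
The claim has two parts: a marginal condition on the input $x_0$, and a probability bound on Alice's concentrated outputs. The marginal condition is immediate: since $\magic$ has input alphabet $\mathcal{X}=[6]$ with uniform marginal on Alice's side, every $x_0 \in [6]$ has $\pi(x_0) = 1/6 \geq 1/9 = p_0$. So we may pick any $x_0$ (say $x_0 = R_1$, the first row, by symmetry of the Magic Square input distribution), and what remains to show is that for any non-signaling strategy $S$ achieving $\Pr(\text{win}) \geq \omega_{ns}(\magic) - \eta = 1 - \eta$, we have $\max_a p(A=a \mid X = x_0) \leq 12/13 + \eta$.

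The plan is to argue by contradiction using the guessing game framework described in the paragraph preceding the claim: assume $q := p(A = a^* \mid X = x_0) > 12/13 + \eta$ for some $a^*$, and convert $S$ into a non-signaling strategy for the guessing game (where Alice must guess Bob's input $y$, a uniformly chosen position in line $x_0$) with winning probability strictly above $1/3$, contradicting the trivial non-signaling upper bound. First, I would dispose of the case where $a^*$ has the wrong parity for $x_0$: then every pair $(x_0, y)$ with $y$ in line $x_0$ is a losing pair under the event $\{A = a^*\}$, contributing at least $q \cdot \pi(x_0) = q/6$ to the total loss probability; hence $q \leq 6\eta$, which for $\eta < 1/13$ is strictly smaller than $12/13 + \eta$, yielding a contradiction. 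So $a^*$ must have the correct parity.

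The heart of the argument is then to convert concentration of $A$ on $x_0$ into a guessing advantage. I would first show that, by the near-optimal win condition together with non-signaling, Bob's marginal distribution $p(B = \cdot \mid Y = y)$ at each of the three positions $y$ in line $x_0$ must be concentrated at $a^*_{i(y)}$ with probability at least $q - O(\eta)$, where $i(y) \in \{1,2,3\}$ is the index of $y$ within line $x_0$. This is because, conditioned on $\{A = a^*, X = x_0\}$, winning on pair $(x_0, y)$ forces $B = a^*_{i(y)}$, and the loss budget is too small to avoid this. The guessing-game strategy is then: Alice samples her output $a$ via her side of $S$ on input $x_0$, and outputs a position $\hat{y} \in \{1,2,3\}$ computed as a deterministic function of $a$ tailored to the bit pattern of $a^*$. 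Combined with the concentration of Bob's outputs, this correlates $\hat{y}$ with $y$ more strongly than the uniform $1/3$ bound allows.

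The main obstacle is the quantitative bookkeeping in the final step: producing a specific guessing function $\phi(x_0, a)$ and tracking the error through the non-signaling constraints tightly enough to isolate the constant $12/13 + \eta$. The constant $1/13$ should emerge from a combination of the three row-$x_0$ consistency constraints with the three column constraints intersecting $x_0$ and the Magic Square parity identity, giving a ``contradiction budget'' of size $\Omega(q - 12/13 - \eta)$ that yields the excess over $1/3$ needed to contradict the non-signaling bound. A purely linear-programming derivation of the same bound, dualizing over the non-signaling polytope constrained by the win probability, would provide an alternative route and should give the same constant.
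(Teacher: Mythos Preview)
Your proposal has a genuine gap, and it stems from two related issues.

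\textbf{Wrong player.} The value $p_0=1/9$ in the claim signals that the intended ``$x_0$'' is one of Bob's nine cell inputs, not one of Alice's six line inputs. The paper's proof accordingly bounds Bob's single-bit output: it supposes, for contradiction, that for \emph{every} cell $y$ there is a bit $b^*(y)$ with $\Pr(B=b^*(y)\mid Y=y)>12/13+\eta$. This global assumption is essential, because it yields a full assignment $b^*:[3]\times[3]\to\{0,1\}$, which by the parity identity must violate the constraint on some line $\ell$. Alice, when asked $\ell$, produces a correct-parity triple $a$; since $a$ and $b^*|_\ell$ have opposite parities they agree in exactly $0$ or $2$ positions, and on the event $\{\win,\,B=b^*(Y)\}$ the queried position $Y$ is one of the agreements. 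Alice then guesses uniformly between the two agreement positions, gaining $1/2$ versus the trivial $1/3$. That is where the constant $12/13$ comes from.

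\textbf{The guessing strategy cannot work as you wrote it.} You instead assume concentration of \emph{Alice's} output at a single line $x_0$, and propose that Alice guess $Y$ via $\hat y=\phi(a)$ for some function of her own output. But non-signaling says the marginal of $A$ given $X=x_0$ is independent of $Y$, hence so is $\hat y$, and $\Pr(\hat y=Y\mid X=x_0)=\tfrac13\sum_{y}\Pr(\hat y=y\mid X=x_0)\le 1/3$ for \emph{any} $\phi$. Your intermediate step (Bob is concentrated at the three cells in $x_0$) is fine, but it does not help: if both $A$ and $B$ are nearly deterministic, then $\hat y\approx\phi(a^*)$ is essentially constant and certainly cannot beat $1/3$. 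The paper's mechanism works precisely because the \emph{guesser} (Alice) is the player whose output is \emph{not} assumed concentrated; the concentrated player merely supplies a fixed reference assignment whose parity defect lets the guesser localize $Y$. Your setup inverts these roles, and the argument collapses. To rescue the approach you would need to have \emph{Bob} guess Alice's input, but from concentration on a single line you only pin down Bob at three cells, which is not enough to run the parity-contradiction step.
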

	
\begin{proof}
Suppose for contradiction that for all $y$, $\max_b \Pr(B = b \mid Y = y) > 12/13 + \eta$. We show that this cannot happen, as it gives rise to a strategy for a guessing game in which Alice guesses Bob's input $y\in[3]\times[3]$ with probability better than $1/3$, which is impossible.
	
	Let $S$ be the strategy employed by Alice and Bob to win the $\magic$ game with probability $1 - \eta$, and such that for every $y$ there exists an output $b^*(y) \in \{0,1\}$ for Bob such that $\Pr(B = b^*(y) \mid Y = y) > 12/13 + \eta$. The function $b^*(y)$ defines an assignment to the $3\times 3$ matrix. There must exist a row or column that violates the parity constraint. Without loss of generality, say that it is the first row.
	
	We now describe the strategy for the guessing game. On input $x$, Alice acts according to strategy $S$ on $x$ and records her output as $a = (a_1,a_2,a_3)$. On input $y$, Bob acts according to strategy $S$ on $y$ (and doesn't need to record any output). If $x$ is not the first row, Alice randomly selects one of the three possible coordinates from the row or column denoted by $x$, and outputs this as her guess for $y$. Otherwise, suppose $x$ is the first row. If Alice's output $(a_1,a_2,a_3)$ doesn't satisfy the parity constraint, she aborts the protocol. The number of $a_i$ that agree with $b^*(1,i)$ is either $0$ or $2$; if it is $0$, Alice will abort the protocol. Otherwise, Alice randomly selects from the two coordinates in agreement and produces that as her guess.
	
	In case that $x$ is not the first row, Alice guesses Bob's input with probability $1/3$. If it is, conditioned on winning the protocol and Bob outputting $b^*(y)$, Alice guesses Bob's input with probability $1/2$. Therefore, 
\begin{align*}
	\Pr(\text{Alice guesses $y$}) &\geq \Pr[\text{Alice guesses $y$} \mid B = b^*(y), \win]\cdot \Pr[B = b^*, \win] \\
																&> \left(\frac{\Pr[\text{$x$ is not first row}]}{3} + \frac{\Pr[\text{$x$ is first row}]}{2} \right) (1 - \eta - (1/13 - \eta)) \\
																&= 1/3
\end{align*}	
where $\Pr[B = b^*, \win] \geq 1 - (1 - \Pr[\win]) - (1 - \Pr[B = b^*]) > 1 - \eta - (1/13 - \eta)$ by the union bound.
\end{proof}

\section{Extending Theorem~\ref{CHSHAND} to arbitrary XOR games}\label{app:upper}

Here we will briefly discuss the extension of Theorem~\ref{CHSHAND} to 2-player XOR games, as well as the GHZ game.  

While considering 2-player XOR games we will, for simplicity, restrict our attention to games which have exactly one valid answer parity for each pair of inputs.  We refer to such games as balanced games.  All balanced games have the form $G(X,Y,A,B) = f(X,Y)\oplus A \oplus B$, where $f(X,Y) = c_1 \oplus c_2 X \oplus c_3 Y \oplus c_4 XY$ for some constants $c_1,c_2, c_3, c_4 \in \{0,1\}$.  The constant term and linear terms can be removed by making a classical addendum to the quantum strategy.  For example, by having Alice XOR her answer with $c_1 \oplus c_2 X$, and Bob XOR his answer with $c_3 Y$.  In this way we can reduce without loss of generality to the case $f(X,Y) = c_4 XY$.  If $c_4 = 0$ then we are done, if $c_4 = 1$ then we have the CHSH game, for which we already know the correct strategy.  So, the proof for balanced 2-player XOR games is an easy extension of that for CHSH, because, in some sense, CHSH characterizes the only interesting example of a 2-player XOR game in this context.

In the (3-player) GHZ game, the three devices are each given an input, which we'll call $X$, $Y$, and $Z$ respectively.  Further, they are guaranteed that $X \oplus Y \oplus Z = 0$.  Their goal is to produce outputs ($A$, $B$ and $C$ respectively) such that $G(X,Y,Z,A,B,C) = f(X,Y,Z)\oplus A \oplus B \oplus C = 0$, where 
$$f(X,Y,Z) \equiv X \vee Y \vee Z = X\oplus Y\oplus Z\oplus XY\oplus YZ\oplus XZ\oplus XYZ$$

We note that for GHZ $\valq(G) = 1$ (this is well known, see \cite{Colbeck11}).  Thus, the three devices can win this game with probability 1 using a quantum strategy.  The analog of Theorem~\ref{CHSHAND} for the GHZ game can be obtained by following the proof of Theorem~\ref{CHSHAND} with slight modifications that we will now describe.  The linear terms of $f(X,Y,Z)$ can be dealt with by a classical modification of the strategy in which the first player XOR's their answer ($A$) with $X$, the second player XOR's their answer with $Y$, etc.  The $XY$, $YZ$, and $XZ$ terms can be dealt with by secretly using the probability 1 non-signalling strategy for CHSH between the respective devices so that the CHSH game essentially is used as a subroutine in the cheating strategy.  For example, if the first and second players secretly play CHSH on using inputs $X'$ and $Y'$ (resp.), and a probability 1 strategy, then they obtain outputs $A'$ and $B'$ respectively such that $A'\oplus B' = X'Y'$.  By XORing these outputs onto their final output they effectively remove the $X'Y'$ term from $f(X,Y,Z)$.  In the case that the input $(X,Y,Z)$ is a linear combination of previous inputs, we can extend this method to deal with the quadratic number of quadratic cross terms, in the same manner as in the cheating strategy for CHSH protocols.  This same technique is used between all three pairs of players.

Lastly, the $XYZ$ term can be dealt with (for any particular input $X_iY_jZ_k$) by secretly playing a series of GHZ games with those inputs.  Since the test $G$ only uses the XOR of the three outputs, we can combine all three of these strategies linearly just as in the proof of Theorem~\ref{CHSHAND}.  Note that, due to the cubic term $XYZ$ we will need to play a cubic number of GHZ games in secret to simulate rounds where the inputs are linear combinations of previous rounds.  As a result the final entropy bound will have a cubic blow up (rather than a quadratic blow up as in the proof for CHSH).  The final entropy bound will be $H_0(A, B, C) \leq O(2^{3m})$.

\bibliographystyle{alphaabbrvprelim}
\bibliography{randomness}

\end{document}